\title{An Investigation of the Recoverable Robust Assignment Problem}
\author{Dennis Fischer}{RWTH Aachen, Department of Computer Science, Germany}{fischer@algo.rwth-aachen.de}{}{}
\author{Tim A. Hartmann}{RWTH Aachen, Department of Computer Science, Germany}{hartmann@algo.rwth-aachen.de}{}{}
\author{Stefan Lendl}{Graz University of Technology, Institute of Discrete Mathematics, Austria}{lendl@math.tugraz.at}{}{}
\author{Gerhard J. Woeginger}{RWTH Aachen, Department of Computer Science, Germany}{woeginger@algo.rwth-aachen.de}{}{}
\authorrunning{D. Fischer, T. A. Hartmann, S. Lendl, G. J. Woeginger } 
\keywords{assignment problem, matchings, exact matching, robust optimization, fixed paramter tractablity, RNC} 
\definecolor{RWTH_blue100}{HTML}{00549F}
\definecolor{RWTH_blue75}{HTML} {407FB7}
\definecolor{RWTH_blue50}{HTML} {8EBAE5}
\definecolor{RWTH_blue25}{HTML} {C7DDF2}
\definecolor{RWTH_blue10}{HTML} {E8F1FA}
\definecolor{RWTH_black100}{HTML}{000000}
\definecolor{RWTH_black75}{HTML} {646567}
\definecolor{RWTH_black50}{HTML} {9C9E9F}
\definecolor{RWTH_black25}{HTML} {CFD1D2}
\definecolor{RWTH_black10}{HTML} {ECEDED}
\definecolor{RWTH_magenta100}{HTML}{E30066}
\definecolor{RWTH_magenta75}{HTML} {E96088}
\definecolor{RWTH_magenta50}{HTML} {F19EB1}
\definecolor{RWTH_magenta25}{HTML} {F9D2DA}
\definecolor{RWTH_magenta10}{HTML} {FDEEF0}
\definecolor{RWTH_yellow100}{HTML}{FFED00}
\definecolor{RWTH_yellow75}{HTML} {FFF055}
\definecolor{RWTH_yellow50}{HTML} {FFF59B}
\definecolor{RWTH_yellow25}{HTML} {FFFAD1}
\definecolor{RWTH_yellow10}{HTML} {FFFDEE}
\definecolor{RWTH_petrol100}{HTML}{006165}
\definecolor{RWTH_petrol75}{HTML} {2D7F83}
\definecolor{RWTH_petrol50}{HTML} {7DA4A7}
\definecolor{RWTH_petrol25}{HTML} {BFD0D1}
\definecolor{RWTH_petrol10}{HTML} {E6ECEC}
\definecolor{RWTH_tuerkis100}{HTML}{0098A1}
\definecolor{RWTH_tuerkis75}{HTML} {0098A1}
\definecolor{RWTH_tuerkis50}{HTML} {89CCCF}
\definecolor{RWTH_tuerkis25}{HTML} {CAE7E7}
\definecolor{RWTH_tuerkis10}{HTML} {EBF6F6}
\definecolor{RWTH_green100}{HTML}{57AB27}
\definecolor{RWTH_green75}{HTML} {8DC060}
\definecolor{RWTH_green50}{HTML} {B8D698}
\definecolor{RWTH_green25}{HTML} {DDEBCE}
\definecolor{RWTH_greem10}{HTML} {F2F7EC}
\definecolor{RWTH_orange100}{HTML}{F6A800}
\definecolor{RWTH_orange75}{HTML} {FABE50}
\definecolor{RWTH_orange50}{HTML} {FDD48F}
\definecolor{RWTH_orange25}{HTML} {FEEAC9}
\definecolor{RWTH_orange10}{HTML} {FFF7EA}
\definecolor{RWTH_red100}{HTML}{CC071E}
\definecolor{RWTH_red75}{HTML} {D85C41}
\definecolor{RWTH_red50}{HTML} {E69679}
\definecolor{RWTH_red25}{HTML} {F3CDBB}
\definecolor{RWTH_red10}{HTML} {FAEBE3}
\definecolor{RWTH_violett100}{HTML}{612158}
\definecolor{RWTH_violett75}{HTML} {834E75}
\definecolor{RWTH_violett50}{HTML} {A8859E}
\definecolor{RWTH_violett25}{HTML} {D2C0CD}
\definecolor{RWTH_violett10}{HTML} {EDE5EA}
\tikzset{
	thick/.style = {line width=.75mm},
	matched/.style = {red!90,thick}, 
	matchedalt/.style = {RWTH_green75!80, thick},
}
\newcommand{\nn}{\mathbb{N}}
\newcommand{\rr}{\mathbb{R}}
\newcommand{\probl}[1]{\textsc{#1}}
\newcommand{\set}[1]{\left\{#1\right\}}
\newcommand{\doublematchedH}[3]
{
\begin{pgfonlayer}{bg}
		\draw[transform canvas={yshift=.11em}, matched] (#2.center) -- (#3.center);
		\draw[transform canvas={yshift=-.11em}, matchedalt] (#2.center) -- (#3.center);
\end{pgfonlayer}
}
\newcommand{\doublematchedV}[3]
{
\begin{pgfonlayer}{bg}
		\draw[transform canvas={xshift=.11em}, matched] (#2.center) -- (#3.center);
		\draw[transform canvas={xshift=-.11em}, matchedalt] (#2.center) -- (#3.center);
\end{pgfonlayer}
}
\newcommand{\problemdefSimple}[3]{
\vspace{0.15cm}
\begin{tabularx}{{0.95\textwidth}}{ r X }
\multicolumn{2}{l}{#1} \\
Input: & #2 \\
Question: & #3
\end{tabularx}
\vspace{0.1cm}
}
\begin{document}
\maketitle

\begin{abstract}
	We investigate the so-called recoverable robust assignment problem on balanced
	bipartite graphs with $2n$ vertices, a mainstream problem in robust optimization:
	For two given linear cost functions $c_1$ and $c_2$ on the edges and a given
	integer $k$, the goal is to find two perfect matchings $M_1$ and $M_2$ that
	minimize the objective value $c_1(M_1)+c_2(M_2)$, subject to the constraint that 
	$M_1$ and $M_2$ have at least $k$ edges in common. 
	
	We derive a variety of results on this problem.
	First, we show that the problem is W[1]-hard with respect to the parameter $k$,
	and also with respect to the recoverability parameter $k'=n-k$. 
	This hardness result holds even in the highly restricted special case where
	both cost functions $c_1$ and $c_2$ only take the values $0$ and $1$.
	(On the other hand, containment of the problem in XP is straightforward to see.)
	Next, as a positive result we construct a polynomial time algorithm for the 
	special case where one cost function is Monge, whereas the other one is Anti-Monge. 
	Finally, we study the variant where matching $M_1$ is frozen, and where the
	optimization goal is to compute the best corresponding matching $M_2$,
	the second stage recoverable assignment problem.
	We show that this problem variant is contained in the randomized parallel
	complexity class $\text{RNC}_2$, and that it is at least as hard as the infamous 
	problem \probl{Exact Matching in Red-Blue Bipartite Graphs} whose computational 
	complexity is a long-standing open problem.
	\end{abstract}
	
\newpage
	
	\section{Introduction}
	\label{sec:intro}
	The \probl{Assignment Problem} (\probl{AP}) is a fundamental and well-investigated problem 
	in discrete optimization:
	For the complete bipartite graph $K_{n,n}=(V,E_{n,n})$ with given costs $c:E_{n,n}\to\rr$ on the edges, 
	the \probl{AP} asks for a perfect matching $M$ in $K_{n,n}$ that minimizes the total cost $c(M)$.
	The \probl{AP} can be solved in polynomial time, by using for instance the Hungarian method or techniques 
	from network flow theory; see Burkard, Dell'Amico \& Martello \cite{burkard2012assignment}.
	
	In this paper we study a variant of the \probl{AP} from the area of robust optimization,
	which we denote as \probl{recoverable assignment problem} (\probl{RecovAP}).
	An instance of \probl{RecovAP} consists of two cost functions $c_1,c_2:E_{n,n}\to\rr$ on the edges of $K_{n,n}$ 
	together with an integer bound $k$.
	The goal is to find two perfect matchings $M_1$ and $M_2$ that minimize the objective value $c_1(M_1)+c_2(M_2)$, 
	subject to the constraint that $M_1$ and $M_2$ have at least $k$ edges in common.
	We also consider the following two non-trivial special cases of \probl{RecovAP}:
	\begin{itemize}
	\item
	Consider an arbitrary (bipartite) subgraph $G=(V,E)$ of $K_{n,n}$. 
	If the cost functions $c_1$ and $c_2$ are set to $+\infty$ on all edges outside $E$, 
	one arrives at the graphic special case of \probl{RecovAP} for bipartite input graphs $G$.
	This allows us to study the problem with graph-theoretic tools, and to look into graph-theoretic structures.
	\item
	If the cost function $c_1$ is set to zero on the edges of some fixed perfect matching and set to $+\infty$
	on all the remaining edges, the perfect matching $M_1$ is thereby fixed and frozen at the zero-cost edges.
	Then problem \probl{RecovAP} boils down to finding a matching $M_2$ that minimizes $c_2(M_2)$ subject
	to the constraint $|M_1\cap M_2|\ge k$; the resulting optimization problem is called the
	\probl{Second-stage recoverable assignment problem} (\probl{2S-RecovAP}).
	\end{itemize}
	Both problems \probl{RecovAP} and \probl{2S-RecovAP} are motivated by (central and natural) questions in the
	area of Recoverable Robust Optimization; see Appendix~\ref{sec:robustapplication} for more details.
	
	\subparagraph{Known and related results.}
	The study of discrete optimization problems with intersection constraints (as imposed in problem \probl{RecovAP}) 
	was initiated through applications in Recoverable Robust Optimization under interval uncertainty.
	The literature mainly analyzes situations where the feasible solutions form the bases of varous types of matroids: 
	Kasperski \& Zieli\'{n}ski~\cite{kasperski2015robust} construct a polynomial time solution for the case 
	of uniform matroids; the underlying robust optimization problem is called the recoverable selection problem.
	Lachmann \& Lendl~\cite{lachmann2019} provide a simple greedy-type algorithm for recoverable selection, and 
	thereby improve the time complexity in \cite{kasperski2015robust} from cubic time down to linear time.
	Hradovic, Kasperski \& Zieli\'{n}ski~\cite{hradovich2017recoverable,hradovich2017recoverable-MST}
	obtain a polynomial time algorithm for the recoverable matroid basis problem and a 
	strongly polynomial time algorithm for the recoverable spanning tree problem. 
	These results have been generalized and improved by Lendl, Peis \& Timmermans~\cite{lendl2019matroid} 
	who show that the recoverable matroid basis and the recoverable polymatroid basis problem can both be 
	solved in strongly polynomial time. 
	Iwamasa \& Takayawa~\cite{iwamasa2020conf} further generalize these results and cover cases with nonlinear 
	and convex cost functions.
	
	B\"using \cite{busing2012recoverable} derives various NP-hardness results for recoverable 
	robust shortest $s$-$t$-path problems, and thus makes one of the first steps in this area 
	beyond feasible solutions with a matroidal structure.
	Note that the combinatorics of $s$-$t$-paths is substantially more complex than the combinatorics of matroid bases: 
	whereas all bases of a matroid have the same cardinality, different $s$-$t$-paths may contain totally
	different numbers of edges.
	For that reason (see also Appendix~\ref{sec:robustapplication}), recoverable robust shortest $s$-$t$-path 
	problems do not (easily) translate into corresponding optimization problems that ask for two feasible solutions 
	with at least $k$ common elements.

	Şeref et al.~\cite{seref-incremnetal-ap} study \probl{2S-RecovAP} and obtain a randomized algorithm 
	running in polynomial time if the costs are polynomially bounded.
	
	\subparagraph{Our contribution.}
	By analyzing problem \probl{RecovAP}, we take another step beyond matroidal structures 
	in recoverable robust optimization.
	Section~\ref{sec:parameterized} discusses the computational complexity of \probl{RecovAP}.
	We look into the parameterized complexity of \probl{RecovAP}.
	We show that the problem is W[1]-hard with respect to the central parameter $k$, the lower bound on the 
	intersection size of the two matchings.
	Furthermore, the problem is W[1]-hard with respect to the so-called recoverability parameter $k'=n-k$, 
	which bounds the number of edges that are in matching $M_1$ but not in matching $M_2$.
	These hardness results even hold in the highly restricted case where both cost functions $c_1$ and $c_2$ 
	only take the values $0$ and $1$.
	Similar W[1]-hardness results hold for the graphic version of \probl{RecovAP} on planar graphs.
	On the positive side, there exists a simple XP algorithm for parameter $k$ (that checks all possible 
	sets $M_1\cap M_2$ of size $k$) and there also exists a simple XP algorithm for parameter $k'$ (that 
	checks all possible sets $M_1-M_2$ and $M_2-M_1$ of size $k'$). This is in contrast to the 
	variants of the problem with the constraint $|M_1 \cap M_2| \leq k$ or  $|M_1 \cap M_2| = k$.
	These problems are easily shown to be NP-hard for each fixed $k$ via a reduction from 
	the \probl{Disjoint Matchings Problem}~\cite{Frieze1983}.
	Finally, the graphic version of \probl{RecovAP} with respect to parameter treewidth is in FPT.
	
	Next, in Section~\ref{sec:monge} we discuss problem \probl{RecovAP} under Monge-type conditions; we refer 
	to Burkard, Klinz \& Rudolf~\cite{burkard1996perspectives} for an extensive overview of Monge properties.
	The cost function in the assignment problem may naturally be viewed as an $n\times n$ cost matrix.
	If both cost functions $c_1$ and $c_2$ correspond to Monge matrices, problem \probl{RecovAP} boils 
	down to something trivial:
		In the optimal solution both matchings $M_1$ and $M_2$ run along the 
	main diagonal of the underlying matrix.
	And if both cost functions $c_1$ and $c_2$ correspond to Anti-Monge matrices, then in the optimal 
	solution both matchings $M_1$ and $M_2$ run along the secondary diagonal of the underlying matrix.
	The mixed case where $c_1$ corresponds to a Monge matrix and where $c_2$ corresponds to an Anti-Monge matrix
	is less trivial and more interesting.
	By analyzing the combinatorial structure of potential optimal solutions, we show that it is
	solvable in polynomial time.
	
	Finally, in Section~\ref{sec:incremental} we turn to the second-stage recoverable assignment 
	problem \probl{2S-RecovAP}, which shows a strange and rather unpleasant behavior.
	We feel that problem \probl{2S-RecovAP} is too hard to allow a polynomial time solution, and 
	we simultaneously feel that it is too easy to allow an NP-hardness proof.
	We support our intuition by two mathematical arguments:
	First, by a straightforward reduction to the extact matching problem in red blue biparite graphs 
	by Şeref et al.~\cite{seref-incremnetal-ap}, there exists an $\text{RNC}_2$ algorithm for \probl{2S-RecovAP}.
	As the complexity class $\text{RNC}_2\subseteq\text{RNC}$ is conjectured to be properly 
	contained in $\text{NP}$, this provides evidence for the easiness of \probl{2S-RecovAP}.
	Secondly, we show that the exact matching problem in red-blue bipartite graphs~\cite{exactmatchingweb} 
	is logspace reducible to \probl{2S-RecovAP}. 
	As the existence of a polynomial time algorithm for this exact red-blue matching problem is doubtful
	(and constitutes a long-open famous problem), this provides evidence for the hardness of \probl{2S-RecovAP}.

\section{Parameterized Complexity}\label{sec:parameterized}

To show $\W[1]$-hardness of the \probl{RecovAP} problem we reduce from the well known grid tiling problem. In the grid tiling problem we are given an $\ell \times \ell$ grid in which every cell contains a set of tuples.
The task is to select a value for every row and for every column compatible with the tuples in the cells:
	Each cell defined by a row and column combination contains a tuple with the values selected for this row and column.

\problemdefSimple{\probl{Grid Tiling}}
{Integers $\ell$, $n$, and a collection $\mathcal{S} = (S_{i, j})_{(i, j) \in [\ell]\times [\ell]}$ with $S_{i, j} \subseteq [n]\times [n]$.}
{Are there integers $r_{1},\dots,r_\ell$ and $c_{1},\dots,c_\ell$  such that $(r_i, c_j) \in S_{i, j}$ for every $i, j \in [\ell]$?}


\probl{Grid Tiling} has been shown to be $W[1]$-hard for parameter $\ell$ and has no   $f(\ell)n^{o(\ell)}$-time algorithm \cite{Cygan2015}.
For simplicity, we assume that every value $1, \ldots, n$ appears in at least one tuple of $\mathcal{S}$;
	which can be achieved by renaming the occurring $n$ many values increasingly.
That way we ensure that the size of the numbers are polynomial in the size of the input.

The main idea of the reduction is to build two graphs, one for the rows and one for the columns of the grid tiling problem. Those two graphs are identified at exactly those edges that represent the tuples. We have exactly one pair of edges in the intersection for every cell of the grid. We set the costs in such a way that an optimal solution has cost 0. That way certain edges are allowed or disallowed for a matching via the assigned costs.

\newcommand{\row}[4][n]
{
	\node[below of=#3Left] (#2Left) {};
	\node[right of=#2Left] (#2Right) {};
	\node[left of=#2Left] (#2Start) {};
	\node[right of=#2Right] (#2End) {};
	\begin{pgfonlayer}{bg}
		\ifthenelse{\equal{#1}{m}}
		{
			\draw (#2Start.center) --(#2Left.center);
			\draw[matched, thick] (#2Left.center) -- (#2Right.center);
			\draw (#2Right.center) -- (#2End.center);
		}
		{
			\draw[matched] (#2Start.center) --(#2Left.center);
			\draw[thick] (#2Left.center) -- (#2Right.center);
			\draw[matched] (#2Right.center) -- (#2End.center);
		}
	\end{pgfonlayer}
	
	\coordinate[label={[label distance=-0.25cm, align=center]above: \footnotesize #4}] (d) at ($(#2Left)!0.5!(#2Right)$);
}
\newcommand{\rowNone}[4][n]
{
	\node[below of=#3Left] (#2Left) {};
	\node[right of=#2Left] (#2Right) {};
	\node[left of=#2Left] (#2Start) {};
	\node[right of=#2Right] (#2End) {};
	\begin{pgfonlayer}{bg}
		\draw (#2Start.center) --(#2Left.center);
		\draw[thick] (#2Left.center) -- (#2Right.center);
		\draw (#2Right.center) -- (#2End.center);
	\end{pgfonlayer}
	
	\coordinate[label={[label distance=-0.25cm, align=center]above: \footnotesize #4}] (d) at ($(#2Left)!0.5!(#2Right)$);
}

\begin{figure}[t]
	\begin{center}

\captionsetup{position=b}
\begin{tikzpicture}[scale=1]
\begin{scope}[local bounding box=boxA, every node/.style={draw,circle,fill=white}, node distance=0.75cm]

\node (n1S) {};
\node[below of=n1S] (n2S) {};
\node[below of=n2S] (n3S) {};
\node[below of=n3S] (n4S) {};
\node[below of=n4S] (n5S) {};

\node[right of=n1S, label={right: $r_i = 1$}] (n1) {};
\node[right of=n2S, label={right: $\underline{r_i = 2}$}] (n2) {};
\node[right of=n3S, label={right: $r_i = 3$}] (n3) {};
\node[right of=n4S, label={right: $r_i = 4$}] (n4) {};
\node[right of=n5S, label={right: $r_i = 5$}] (n5) {};

\node[left of=n3S] (start) {};

\begin{pgfonlayer}{bg}    
\draw (start.center) --(n1S.center);
\draw (start.center) --(n2S.center);
\draw[matched] (start.center) --(n2S.center);
\draw (start.center) --(n3S.center);
\draw (start.center) --(n4S.center);
\draw (start.center) --(n5S.center);

\draw (n2S.center) -- (n2.center);
\draw (n1S.center) -- (n1.center);
\draw (n3S.center) -- (n3.center);
\draw (n4S.center) -- (n4.center);
\draw (n5S.center) -- (n5.center);

\draw[matched] (n1S.center) -- (n1.center);
\draw[matched] (n3S.center) -- (n3.center);
\draw[matched] (n4S.center) -- (n4.center);
\draw[matched] (n5S.center) -- (n5.center);

\node[right of=n1, above of=n1, draw=none] (gne) {};
\node[above of=n1S, draw=none] at ($(n1S)!0.5!(n1)$)  (gnw) {};
\node[below of=n5S, draw=none] at ($(n5S)!0.5!(n5)$)  (gsw) {};
\node[right of=n5, below of=n5, draw=none] (gse) {};

\draw [draw=black, dashed] (gnw.center) -- (gne.center);
\draw [draw=black, dashed] (gnw.center) -- (gsw.center);
\draw [draw=black, dashed] (gsw.center) -- (gse.center);

\end{pgfonlayer}

	\end{scope}
	\node[xshift=-.1cm, yshift=-1.55cm] at (boxA.west) {\textbf{(a)}};
	\begin{scope}[local bounding box=boxB, xshift=7cm, scale=0.8, every node/.style={draw,circle,fill=white, minimum size=.20cm}, node distance=0.75cm]

\coordinate (tLeft);
		
		\rowNone{13}{t}{$(1,3)$}
		\rowNone{14}{13}{$(1,4)$}
		\rowNone{17}{14}{$(1,7)$}
		\rowNone{21}{17}{$(2,1)$}
		\rowNone{22}{21}{$(2,2)$}
		
		\row{13}{t}{$(1,3)$}
		\row{14}{13}{$(1,4)$}
		\row{17}{14}{$(1,7)$}
		\row[m]{21}{17}{$(2,1)$}
		\row{22}{21}{$(2,2)$}
		
		\coordinate (tC) at ($(13Left)!0.5!(13Right)$);
		\node[draw=none, above=0.5cm of tC] {$S_{i,j}$};
		
		\node[left of=14Start] (Choice1) {};
		\node[left of=Choice1, label={above left: $r_i = 1$}] (Choice1L) {};

		\node[right of=14End] (Choice1E) {};
		\node[right of=Choice1E, label={}] (Choice1ER) {};

		\coordinate(2Middle) at ($(21Start)!0.5!(22Start)$);
		\node[left of=2Middle] (Choice2) {};
		\node[left of=Choice2, label={below left: $\underline{r_i = 2}$}] (Choice2L) {};

		\coordinate(2MiddleE) at ($(21End)!0.5!(22End)$);
		\node[right of=2MiddleE] (Choice2E) {};
		\node[right of=Choice2E, label={}] (Choice2ER) {};

		\coordinate(Choice1Middle) at ($(Choice1)!0.5!(Choice1L)$);
		\coordinate(Choice2MiddleE) at ($(Choice2E)!0.5!(Choice2ER)$);
		
		\begin{pgfonlayer}{bg}    
		\draw (Choice1L.center) --(Choice1.center);
		\draw[matched] (Choice1L.center) --(Choice1.center);
		
		\draw (Choice1.center) --(13Start.center);
		\draw (Choice1.center) --(14Start.center);
		\draw (Choice1.center) --(17Start.center);
		\draw (Choice1ER.center) --(Choice1E.center);
		\draw[matched] (Choice1ER.center) --(Choice1E.center);
		
		\draw (Choice1E.center) --(13End.center);
		\draw (Choice1E.center) --(14End.center);
		\draw (Choice1E.center) --(17End.center);
		\draw (Choice2L.center) --(Choice2.center);
		
		\draw (Choice2.center) --(21Start.center);
		\draw[matched] (Choice2.center) --(21Start.center);
		\draw (Choice2.center) --(22Start.center);
		\draw (Choice2ER.center) --(Choice2E.center);
		
		\draw (Choice2E.center) --(21End.center);
		\draw[matched] (Choice2E.center) --(21End.center);
		\draw (Choice2E.center) --(22End.center);
		\draw [draw=black, dashed] ([shift={(0, 1.75)}]Choice1Middle) rectangle ([shift={(0, -1)}]Choice2MiddleE);
			\draw [draw=white] ([shift={(-5.6, -1)}]Choice2MiddleE) -- ([shift={(0, -1)}]Choice2MiddleE);
		\end{pgfonlayer}

	\end{scope}
	\node[xshift=-.2cm, yshift=-1.1cm] at (boxB.west) {\textbf{(b)}};
\end{tikzpicture}

\end{center}

\caption{
Gadgets for the $\W[1]$-hardness result for \probl{RecovAP} and parameter $k$:
(a)
Selection gadget for row values (analogously for column values).
The depicted matching fixes value 2.
(b)
Component for one row in $S_{i, j}$ (analogously for one column). Each middle edge represents one tuple in $S_{i, j}$ This component exists for both columns and rows and the middle edges are identified if the corresponding tuples are the same.
}
\label{fig:rowselect}
\label{fig:gridcellgadget}
\end{figure}
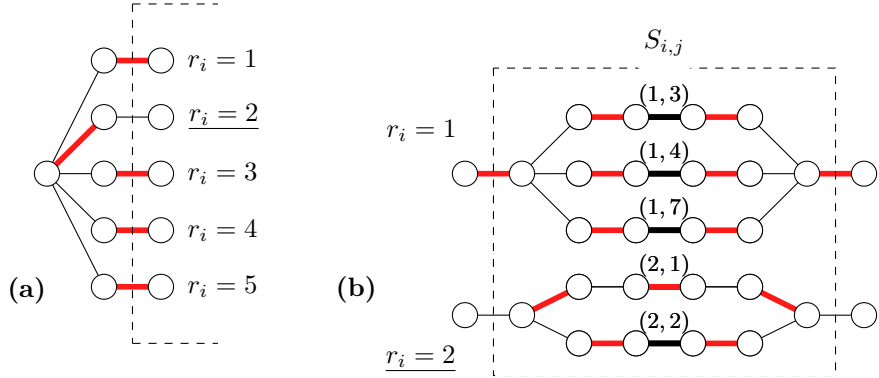

\begin{theorem}
\label{lemma:standard:hardness}
	\probl{RecovAP} is $\W[1]$-hard for parameter $k$ with edge costs $(c_1(e), c_2(e)) \in \set{(0, 0), (0, 1), (1,0), (1,1)}$ for all edges $e$, and unless \textup{ETH} fails it has no $f(k) n^{o(\sqrt{k})}$-time algorithm.
\end{theorem}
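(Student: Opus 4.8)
The plan is to reduce from \probl{Grid Tiling} with parameter $\ell$, turning each of the $\ell^2$ cells into exactly one forced edge of the intersection $M_1\cap M_2$, so that the target intersection bound is $k=\ell^2$. First I would build a single bipartite graph whose edges carry one of the four allowed cost pairs, designed so that a solution of objective value $0$ exists if and only if the \probl{Grid Tiling} instance is solvable. The three relevant cost types play the following roles: an edge of cost $(0,1)$ is usable in $M_1$ but effectively forbidden in $M_2$ at zero cost, an edge of cost $(1,0)$ is usable only in $M_2$, and an edge of cost $(0,0)$ --- which I reserve for the \emph{tuple edges} --- is the only type that may lie in the intersection. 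Every remaining (non-gadget) edge is given cost $(1,1)$ so that a zero-cost solution provably avoids it; this is what lets me embed the gadget graph into $K_{N,N}$ while keeping the cost range as stated.

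For the gadgets I would use the two components sketched in Figure~\ref{fig:rowselect}. For each row $i$ a selection gadget, read through $M_1$, commits to one value $r_i\in[n]$ by leaving exactly the value-vertex $r_i$ free to be picked up by the cell components while all other value-vertices are matched internally; the column gadgets are the mirror image read through $M_2$. For each cell $(i,j)$ and each tuple $(a,b)\in S_{i,j}$ I create a single tuple edge of cost $(0,0)$ that is shared between the row-$i$ component and the column-$j$ component (this is the identification of middle edges described above). The component is arranged so that a tuple edge can enter $M_1$ only when the row gadget has exported value $a$ (i.e.\ $r_i=a$) and can enter $M_2$ only when the column gadget has exported value $b$ (i.e.\ $c_j=b$); moreover vertex-disjointness of a matching forces at most one tuple edge per cell to be taken in each matching. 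To guarantee that zero-cost perfect matchings actually exist, every structural vertex of the row part is additionally equipped with cost-$(1,0)$ ``shadow'' edges letting $M_2$ cover it for free, and symmetrically for the column part, so that neither matching is ever pushed onto a $(1,1)$ edge. Finally I set $k=\ell^2$.

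Correctness then splits into two directions. For the forward direction, from a \probl{Grid Tiling} solution $(r_i),(c_j)$ I activate in each row and column gadget the corresponding value, place the tuple edge of $(r_i,c_j)\in S_{i,j}$ into both matchings, and cover all remaining vertices by the internal and shadow edges; this yields two perfect matchings of total cost $0$ with exactly one common edge per cell, hence $|M_1\cap M_2|=\ell^2=k$. For the backward direction I use that objective value $0$ forces $M_1$ onto $c_1=0$ edges and $M_2$ onto $c_2=0$ edges, so every common edge is a $(0,0)$ tuple edge; since each cell supplies at most one such common edge and there are only $\ell^2$ cells, the bound $|M_1\cap M_2|\ge\ell^2$ makes each cell contribute exactly one. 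Reading off, in each row gadget $M_1$ determines a single value $r_i$ and in each column gadget $M_2$ a single value $c_j$, and the common tuple edge of cell $(i,j)$ certifies $(r_i,c_j)\in S_{i,j}$, which is precisely a \probl{Grid Tiling} solution.

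The hardness consequences follow from the parameter bookkeeping. The reduction is polynomial, and since we assumed every value of $[n]$ occurs in some tuple, the constructed instance has $2N$ vertices with $N=\mathrm{poly}(n,\ell)$; crucially the new parameter is $k=\ell^2$, a function of $\ell$ alone, which transfers $\W[1]$-hardness in $\ell$ to $\W[1]$-hardness in $k$. For the ETH lower bound, a hypothetical $f(k)\,N^{o(\sqrt{k})}$ algorithm would, on the image instances where $k=\ell^2$, run in time $f(\ell^2)\,\mathrm{poly}(n,\ell)^{o(\ell)}$ and hence solve \probl{Grid Tiling} in time $f'(\ell)\,\mathrm{poly}(n,\ell)^{o(\ell)}$, contradicting its $f(\ell)\,n^{o(\ell)}$ lower bound under ETH. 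The step I expect to be most delicate is the gadget design that simultaneously (i) forces each row/column gadget to commit to a single value and propagate it identically to all $\ell$ cells of its line, and (ii) guarantees the ``at most one common edge per cell'' property while still admitting zero-cost perfect matchings through the shadow edges; making these two requirements coexist without accidentally creating extra zero-cost intersection edges is the crux of the construction.
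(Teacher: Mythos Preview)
Your proposal is correct and follows essentially the same approach as the paper: a reduction from \probl{Grid Tiling} with $k=\ell^2$, using $(0,0)$-cost tuple edges as the only candidates for $M_1\cap M_2$, row/column selection gadgets built from $(0,1)$- and $(1,0)$-cost edges respectively, auxiliary vertices (your ``shadow'' edges) to complete both perfect matchings at zero cost, and $(1,1)$-cost filler edges to embed everything into $K_{N,N}$. The forward/backward correctness argument and the ETH bookkeeping you outline match the paper's proof, and your closing remark correctly identifies the propagation of row/column values through all $\ell$ cells of a line as the place where the gadget design needs the most care.
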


\begin{proof}
	Let $\mathcal{I} = (\mathcal{S}, n, \ell)$ be a \probl{Grid Tiling} instance.
	We construct a \probl{RecovAP} instance that asks for matchings of cost 0
		and with at least $k = \ell^2$ double edges $e \in M_1 \cap M_2$.
	We have an edge of cost $(0,0)$ for every tuple of every grid cell.
	Our construction forces exactly one $(0,0)$ cost edge in $M_1\cap M_2$ per grid cell,
		which marks that $(r_i,c_j) \in S_{i,j}$.
	We force these tuples to comply with a global selection of row and column values $r_1,\dots,r_\ell$ and $c_1,\dots,c_\ell$.
	
	Beside these edges for tuples, no edge has cost $(0,0)$.
	The edges defined in the following gadget used for fixing a row value have cost $(0,1)$,
		hence are cheap for $M_1$;
		vice versa the analogous edges used in the gadget for fixing the column values have cost $(1,0)$.
	Finally, additional edges to make the graph complete bipartite have cost $(1,1)$,
		which thus may never be selected by $M_1$ nor $M_2$.
	Thus we may note that for a non-complete bipartite graph edge costs $\{(0,0),(0,1),(1,0)\}$ suffice.
	
	Matching $M_1$ fixes the selection of the row values, while $M_2$ analogously fixes the selection for the column values.
	We use a selection gadget to fix a selection of a row value $r_i$ between 1 and $n$, see \cref{fig:rowselect}(a).
	The matching of the high degree vertex to a neighbor $u$ fixes the value of $r_i$.
	Neighbor $u$ is the only neighbor of $r_i$ that is not matched to its corresponding right neighbor $u'$.
	Hence, the selection of $r_i$ is propagated to the edges corresponding to tuples of a grid cell.
	All edges for the row selection gadget have cost $(0,1)$, hence are cheap for matching $M_1$.
	Later, we will introduce cheap edges for $M_2$ incident to the vertices of this selection gadget to guarantee that $M_2$ is a perfect matching.
	
	We extend the path from a selection of say $r_i=2$ to the $(0,0)$ cost edges corresponding to tuples of the first grid cell (of column 1). We then continue this path for every following grid cell (of column 2 to $\ell$), 
	see \cref{fig:gridcellgadget}(b) and how it can be connected cell by cell.
	We split the path of say selection $r_i=a$ at node $v$ into as many tuples with row value $a$ as there are in the corresponding grid cell.
	If the selection was $r_i=2$, then (by having the right path lengths) matching $M_2$ has to have $v$ matched to on of its right neighbors.
	The matching of $v$ to one of its neighbors, on a path to an edge corresponding to a tuple $(r_i=a,c_j=b)\in S_{i,j}$, corresponds to the fact that this tuple satisfy the condition for that cell.
		Note that for the not selected row values, in this example $r_i \neq 1$,
		none of the tuple edges will be matched.
	
	Analogously, the column values are selected and have paths extending to the $(0,0)$-cost edges.
	For $M_1,M_2$ to coincide at a $(0,0)$ cost edge of a grid cell $(i,j)$, here for tuple $(a,b)$, the selection gadget for $c_1$ has to select $b$ and analogously hast to extend its alternating path to the $(0,0)$ cost edge for the tuple $(a,b)$.
	Thus the tuple selection for a grid cell complies with the row and column value selection.
	
	For the construction, it remains to add the before mentioned further additional edges such that $M_1$ and $M_2$ are perfect matchings.
	The matchings can be completed by adding additional helper vertices:
		for every vertex $u_r$ of a row gadget a vertex $v$
		with a $(0,1)$ cost edge $(u_r,v)$.
	We also have a $(1,0)$ cost edge from $u$ to the analogous vertex $u_c$ that occurs in a column gadget.
	This way, the new vertex $v$ is matched by $M_1$ and $M_2$.
		These vertices and edges are depicted in \cref{fig:cellgadgetcomplete}(a) as the edges and vertices in faint color.
	
	\def\boxSize{2.5cm}
	\def\sideSep{0.2cm}
	\def\sideDist{0.25cm}
	
\begin{figure}[h]
\begin{center}
\captionsetup{position=b}
\begin{tikzpicture}[scale=0.8, every node/.style={draw, circle, fill=white, minimum size=.25cm,inner sep=0pt}]
\begin{scope}[local bounding box=boxA]

\coordinate (tl);
			\coordinate[right=\boxSize of tl] (tr);
			\coordinate[below=\boxSize of tl] (bl);
			\coordinate[right=\boxSize of bl] (br);
			
			\draw [dashed] (tl) -- (bl) -- (br) -- (tr) -- (tl);
			
			\node[below left=\boxSize/2 and 0.2cm of tl] (l1) {};
			\node[below =\sideSep of l1] (l2) {};
			\node[below =\sideSep of l2] (l3) {};
			
			\node[above =\sideSep of l1] (l4) {};
			\node[above =\sideSep of l4] (l5) {};
			
			\node[below right=\boxSize/2 and 0.2cm of tr] (r1) {};
			\node[below =\sideSep of r1] (r2) {};
			\node[below =\sideSep of r2] (r3) {};
			
			\node[above =\sideSep of r1] (r4) {};
			\node[above =\sideSep of r4] (r5) {};
			
			\node[above right=0.2cm and \boxSize/2 of tl] (t1) {};
			\node[right =\sideSep of t1] (t2) {};
			\node[right =\sideSep of t2] (t3) {};
			
			\node[left =\sideSep of t1] (t4) {};
			\node[left =\sideSep of t4] (t5) {};
			
			\node[below right=0.2cm and \boxSize/2 of bl] (b1) {};
			\node[right =\sideSep of b1] (b2) {};
			\node[right =\sideSep of b2] (b3) {};
			
			\node[left =\sideSep of b1] (b4) {};
			\node[left =\sideSep of b4] (b5) {};
			
			\node[left=\sideDist of l4] (lefttop) {};
			
			\node[draw=none, opacity=0] (bottomlefthelper) at ($(l2)!0.5!(l3)$) {};
			\node[left=\sideDist of bottomlefthelper] (leftbottom) {};
			
			\node[right=\sideDist of r4] (righttop) {};
			
			\node[draw=none, opacity=0] (bottomlefthelper) at ($(r2)!0.5!(r3)$) {};
			\node[right=\sideDist of bottomlefthelper] (rightbottom) {};
			
			\node[draw=none, opacity=0] (toplefthelper) at ($(t1)!0.5!(t4)$) {};
			\node[above=\sideDist of toplefthelper] (topleft) {};
			
			\node[above=\sideDist of t3] (topright) {};
			
			\node[draw=none, opacity=0] (bottomlefthelper) at ($(b1)!0.5!(b4)$) {};
			\node[below=\sideDist of bottomlefthelper] (bottomleft) {};
			
			\node[below=\sideDist of b3] (bottomright) {};
			
			\node[draw] (tuple1Start) at (0.66*\boxSize, -0.33*\boxSize) {};
			\node[draw, below right=0.04*\boxSize of tuple1Start] (tuple1End) {};
			
			\node[draw] (tuple2Start) at (0.33*\boxSize, -0.66*\boxSize) {};
			\node[draw, below right=0.04*\boxSize of tuple2Start] (tuple2End) {};
			
			\node[above left=0.2cm of tl, draw, opacity=0] (e1) {};
			\node[above left=0.2cm of e1, draw, opacity=0] (e2) {};
			\node[above left=0.2cm of e2, draw, opacity=0] (e3) {};
			\node[above left=0.15cm of e3, draw, opacity=0, gray!50] (e4) {};
			\node[above left=0.15cm of e4, draw, opacity=0] (e5) {};
			\node[above left=0.2cm of tl, draw, gray!50, fill=white] {};
			\node[above left=0.2cm of e1, draw, gray!50, fill=white] {};
			\node[above left=0.2cm of e2, draw, gray!50, fill=white] {};
			\node[above left=0.15cm of e3, draw, opacity=0, gray!50, fill=white] {};
			\node[above left=0.15cm of e4, draw, gray!50, fill=white] {};
			
			\node[below right=0.2cm of br, draw, opacity=0] (eb1) {};
			\node[below right=0.2cm of eb1, draw, opacity=0] (eb2) {};
			\node[below right=0.2cm of eb2, draw, opacity=0] (eb3) {};
			\node[below right=0.15cm of eb3, draw, opacity=0, gray!50] (eb4) {};
			\node[below right=0.15cm of eb4, draw, opacity=0] (eb5) {};
			\node[below right=0.2cm of br, draw, gray!50, fill=white] {};
			\node[below right=0.2cm of eb1, draw, gray!50, fill=white] {};
			\node[below right=0.2cm of eb2, draw, gray!50, fill=white] {};
			\node[below right=0.15cm of eb3, draw, opacity=0, gray!50, fill=white] {};
			\node[below right=0.15cm of eb4, draw, gray!50, fill=white] {};
			
			\begin{pgfonlayer}{bg}
				\draw[matchedalt, opacity=0.3] (e1.center) -- (l5);
				\draw[matched, opacity=0.3] (e1.center) -- (t5);
				\draw[matchedalt, opacity=0.3] (e2.center) to[bend right=10] (l4.center);
				\draw[matched, opacity=0.3] (e2.center) to[bend left=10] (t4.center);
				\draw[matchedalt, opacity=0.3] (e3.center) to[bend right=5] (l1.center);
				\draw[matched, opacity=0.3] (e3.center) to[bend left=5] (t1.center);
				\draw[matchedalt, opacity=0.3] (e5.center) to[bend right=5] (leftbottom.center);
				\draw[matched, opacity=0.3] (e5.center) to[bend left=5] (topright.center);
				
				\draw[loosely dotted, opacity=0.7] (e3.north west) -- (e5.south east);
				
				\draw[matchedalt, opacity=0.3] (eb1.center) -- (r3);
				\draw[matched, opacity=0.3] (eb1.center) -- (b3);
				\draw[matchedalt, opacity=0.3] (eb2.center) to[bend right=10] (r2.center);
				\draw[matched, opacity=0.3] (eb2.center) to[bend left=10] (b2.center);
				\draw[matchedalt, opacity=0.3] (eb3.center) to[bend right=5] (r1.center);
				\draw[matched, opacity=0.3] (eb3.center) to[bend left=5] (b1.center);
				\draw[matchedalt, opacity=0.3] (eb5.center) to[bend right=5] (righttop.center);
				\draw[matched, opacity=0.3] (eb5.center) to[bend left=5] (bottomleft.center);
				
				\draw[loosely dotted, opacity=0.7] (eb5.north west) -- (eb3.south east);
				
				\draw (lefttop.center) -- (l5.center);
				\draw[matched] (lefttop.center) -- (l4.center);
				\draw (lefttop.center) -- (l1.center);
				\draw (leftbottom.center) -- (l2.center);
				\draw (leftbottom.center) -- (l3.center);
				\draw (righttop.center) -- (r5.center);
				\draw[matched] (righttop.center) -- (r4.center);
				\draw (righttop.center) -- (r1.center);
				\draw (rightbottom.center) -- (r2.center);
				\draw (rightbottom.center) -- (r3.center);
				\draw (topleft.center) -- (t5.center);
				\draw (topleft.center) -- (t4.center);
				\draw (topleft.center) -- (t1.center);
				\draw[matchedalt] (topleft.center) -- (t2.center);
				\draw (topright.center) -- (t3.center);
				\draw (bottomleft.center) -- (b5.center);
				\draw (bottomleft.center) -- (b4.center);
				\draw (bottomleft.center) -- (b1.center);
				\draw[matchedalt] (bottomleft.center) -- (b2.center);
				\draw (bottomright.center) -- (b3.center);
				
				\draw (l4.center) -- (tuple1Start.center);
				\draw (t2.center) -- (tuple1Start.center);
				\draw (r4.center) -- (tuple1End.center);
				\draw (b2.center) -- (tuple1End.center);
				\draw[transform canvas={xshift=-.1em, yshift=-.1em}, matchedalt] (tuple1Start.center) -- (tuple1End.center);
				\draw[transform canvas={xshift=.1em, yshift=.1em}, matched] (tuple1Start.center) -- (tuple1End.center);
				\draw[matched] (l2.center) -- (tuple2Start.center);
				\draw[matchedalt] (t4.center) -- (tuple2Start.center);
				\draw[transform canvas={xshift=-.1em, yshift=-.1em}, matchedalt] (tuple1Start.center) -- (tuple1End.center);
				\draw[transform canvas={xshift=.1em, yshift=.1em}, matched] (tuple1Start.center) -- (tuple1End.center);
				\draw[matched] (r2.center) -- (tuple2End.center);
				\draw[matchedalt] (b4.center) -- (tuple2End.center);
				\draw (tuple2Start.center) -- (tuple2End.center);
			\end{pgfonlayer}

	\end{scope}
	\node[xshift=-.1cm, yshift=-1.75cm, draw=none] at (boxA.west) {\textbf{(a)}};
	\begin{scope}[local bounding box=boxB,xshift=7cm,yshift=-2cm]

	\node [label=below:$l$] (l) {};

	\node[right of=l] (m) {};
	\node[right of=m] (h) {};
	\node[right of=h, label=below:$r$] (r) {};
	
	\node[above of=m] (v) {};
	\node[above of=v, label=right:$t$] (t) {};
	\node[below of=m, label=right:$b$] (b) {};
	\node[right of=v] (e) {};
	\begin{pgfonlayer}{bg}
%
	
	\draw[matched] (l.center) -- (m.center);
	\draw[matched, dashed] (m.center) -- (h.center);
	\draw[matched] (h.center) -- (r.center);
	\draw[matched] (v.center) -- (e.center);
	
	\draw[matchedalt, dashed] (b.center) -- (m.center);
	\draw[matchedalt] (m.center) -- (v.center);
	\draw[matchedalt, dashed] (v) -- (t);
	\draw[matchedalt] (h.center) -- (e.center);
	\end{pgfonlayer}

	\end{scope}
	\node[xshift=-.2cm, yshift=-1.5cm, draw=none] at (boxB.west) {\textbf{(b)}};
\end{tikzpicture}
\end{center}
\caption{
(a)
A sketch of the surrounding of a grid cell component (in the dashed box) with matching $M_1$ in green and $M_2$ in red.
Most of the edges leading into the box are left out, except for two example diagonal $(0,0)$ cost edges, one in both matchings, one in none.
Outside in faint color are the additional helper vertices to make the matchings perfect.
By positioning these helper vertices on a diagonal as depicted, there are only crossings of edges of cost $(0,1)$ and $(1,0)$.
(b)
A crossing gadget that replaces a crossing of a $(0,1)$ cost edge $(l,r)$ and $(1,0)$ cost edge $(t,b)$.
The red (dashed and fully drawn) edges have cost $(0,1)$ and the green cost $(1,0)$.
The fully drawn red edges show a matching replacing $(\ell,r)\in M_1$
	while the fully drawn green edges show a matching replacing $(t,b)\notin M_2$.
}
\label{fig:planarGadget}
\label{fig:cellgadgetcomplete}
\end{figure}
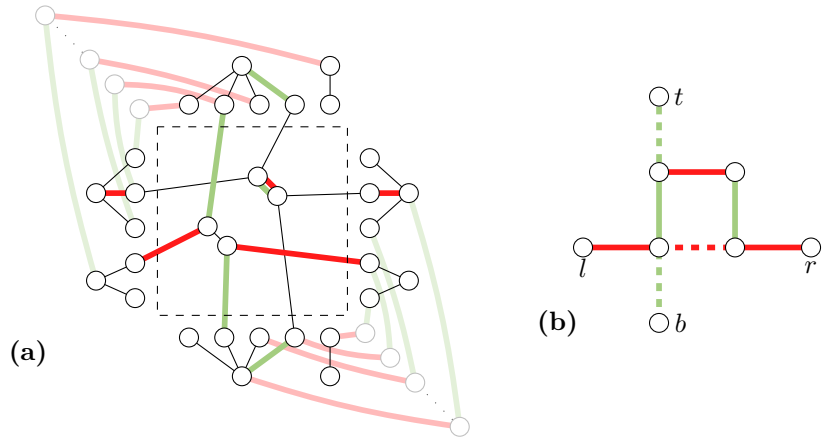
		
	
	For the forward direction, of the correctness,
		assume that there exists a solution to the \probl{Grid Tiling} instance.
	Then the matchings $M_1$ and $M_2$ can be chosen as described in the construction,
		which yields a solution with cost 0.
	
	For the reverse direction, assume that there is a solution to the constructed \probl{RecovAP} instance,
		in other words matchings $M_1$ and $M_2$ of intersection $\ell^2$ and of cost 0.
	Then only the $(0,0)$ cost edges corresponding to the tuples in the grid cells can be in the intersection $M_1 \cap M_2$.
	By construction every selection gadget for a grid cell, corresponding to one specific row and column, there is at most one edge where $M_1$ and $M_2$ coincide.
	Because the number of coinciding edges must be at least $\ell^2$,
		any solution selects exactly one tuple for every cell.
	Further, as noted before, the row value $r_i$ for all the gird cells of one row $i$ has to be the same,
		namely as fixed at the row selection gadget; analogously each column $j$ has a fixed value $c_j$.
	Then these row and column values $r_1,\dots,r_\ell$ and $c_1,\dots,c_\ell$ are a solution to the original \probl{Grid Tiling} instance.
	
	To show the ETH lower bound assume for contradiction that there is an algorithm with running time $f(k) n^{o(\sqrt{k})}$. Then an instance of Grid Tiling can be transformed in polynomial time into an instance of \probl{RecovAP}. For the parameter it holds that $k = \ell^2$. So this leads to a running time $f(\ell) n^{o(\sqrt{\ell^2})} = f(\ell) n^{o(\ell)}$. This is a contradiction.
\end{proof}

This hardness results also translates to planar graphs with the help of a crossing gadget.
Since the input graph is not complete, we no longer need edge costs $(1,1)$.
The key observation is that by an arrangement of the nodes in the plane as in \cref{fig:planarGadget}(a), there are only crossing edges of cost $(0,1)$ and $(1,0)$.
These are from horizontal edges for $M_1$ fixing a row value and vertical edges fixing a column value $M_2$, as well as from the additional helper vertices.
We replace each pair of crossing edges $(\ell,r), (t,b)$ with a construction as depicted in \cref{fig:planarGadget}(b), for which we leave to the reader to verify its correctness.
It is important that the four new vertices may always be matched within this crossing gadget,
	and thus no further helper vertices are needed.
Further, note that this construction can be easily chained in order to suit cases where $(\ell,r)$ or $(t,b)$ cross more than one other edge.

	\begin{corollary}
		\probl{RecovAP} is $\textup{W}[1]$-hard on planar graphs for parameter $k$ with $(c_1(e), c_2(e)) \in \set{(0, 0), (0, 1), (1,0)}$ for all edges $e$, and unless \textup{ETH} fails it has no $f(k) n^{o(\sqrt{k})}$-time algorithm.
	\end{corollary}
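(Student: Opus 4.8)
The plan is to reuse the reduction from \probl{Grid Tiling} built for \cref{lemma:standard:hardness} and to make the resulting graph planar by locally repairing edge crossings, without disturbing the parameter. First I would observe that, since the graphic version of \probl{RecovAP} operates on an \emph{incomplete} bipartite graph, the $(1,1)$-cost edges that were only introduced to complete $K_{n,n}$ can simply be deleted; every remaining edge then carries a cost in $\set{(0,0),(0,1),(1,0)}$, exactly as required by the statement. The target parameter stays $k=\ell^2$ and the set of feasible (cost-$0$) solutions is unchanged, so any $\W[1]$-hardness I establish for the planar instance transfers verbatim, and the ETH bound follows from the identical $k=\ell^2$ bookkeeping already carried out in \cref{lemma:standard:hardness}.

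Next I would fix a concrete planar-style layout of the constructed graph. The row-selection gadgets (whose edges all have cost $(0,1)$ and are used by $M_1$) are drawn horizontally, the column-selection gadgets (cost $(1,0)$, used by $M_2$) vertically, and the helper vertices are placed along a diagonal as in \cref{fig:planarGadget}(a). The structural claim to verify here is that, in this layout, \emph{every} crossing is between a horizontal $(0,1)$-edge and a vertical $(1,0)$-edge: two edges of the same orientation never need to cross because each gadget is itself an outerplanar path, and the $(0,0)$-cost tuple edges sit at the grid cells where a single horizontal and a single vertical strand meet. I would make this precise by describing the coordinates of the $r_i$- and $c_j$-strands and of the helper vertices, and checking that the only remaining intersections are of the stated mixed type.

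The heart of the argument is then the local replacement: each crossing of a $(0,1)$-edge $(\ell,r)$ with a $(1,0)$-edge $(t,b)$ is replaced by the gadget of \cref{fig:planarGadget}(b), which introduces four new vertices and uses only $(0,1)$- and $(1,0)$-cost edges. I would verify its correctness by a short case analysis over the four combinations of whether $(\ell,r)$ is used by $M_1$ and whether $(t,b)$ is used by $M_2$: the four new vertices can always be matched internally by both $M_1$ and $M_2$, so the gadget never destroys perfectness and contributes no new edge to $M_1\cap M_2$; moreover the two ports $\ell,r$ behave exactly like the endpoints of the original $(0,1)$-edge for $M_1$ (it is available to $M_1$ iff the internal red path is used), and symmetrically $t,b$ behave like the endpoints of the original $(1,0)$-edge for $M_2$. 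Since the replacement is purely local and planar, and since an edge crossing several others can be handled by chaining copies of the gadget along that edge, the global instance becomes planar while its set of cost-$0$ solutions of intersection $\ell^2$ is preserved. I expect this correctness check for the crossing gadget --- ensuring it neither creates a spurious edge in $M_1\cap M_2$ nor forces any additional helper vertex --- to be the main obstacle, as everything else is an immediate transfer from \cref{lemma:standard:hardness}.
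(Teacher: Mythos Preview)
Your proposal is correct and follows essentially the same approach as the paper: drop the $(1,1)$-edges once the graph is no longer required to be complete, observe that in the layout of \cref{fig:planarGadget}(a) every crossing is between a $(0,1)$-edge and a $(1,0)$-edge, and replace each such crossing by the gadget of \cref{fig:planarGadget}(b), chaining it when an edge participates in several crossings. The paper in fact leaves the case analysis of the crossing gadget to the reader, so your plan to verify that the four new vertices are always matched internally and contribute nothing to $M_1\cap M_2$ is exactly the check that remains.
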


\begin{figure}[t]
	\centering
	\begin{tikzpicture}[every node/.style={draw,circle,fill=white}]
		\node [label=$l$] (t1) at (0, 0) {};
		\node (t2) at (1, 0) {};
		\node (t3) at (2, 0) {};
		\node (t4) at (3, 0) {};
		\node (t5) at (4, 0) {};
		\node  (t6) at (5, 0) {};
		\node (t7) at (6, 0) {};
		\node [label=$r$] (t8) at (7, 0) {};
		
		\node (a1) at (1, 1) {};
		\node (a2) at (1, 2) {};
		\node (a3) at (1, 3) {};
		
		\node (b1) at (2, 1) {};
		\node (b2) at (2, 2) {};
		\node (b3) at (2, 3) {};
		
		\node (c1) at (3, 1) {};
		\node (c2) at (3, 2) {};
		\node (c3) at (3, 3) {};
		
		\node (d1) at (4, 1) {};
		\node (d2) at (4, 2) {};
		\node (d3) at (4, 3) {};
		
		\node (e1) at (5, 1) {};
		\node (e2) at (5, 2) {};
		\node (e3) at (5, 3) {};
		
		\node (f1) at (6, 1) {};
		\node (f2) at (6, 2) {};
		\node (f3) at (6, 3) {};
		
		\draw (t2) -- (t3);
		\draw (t4) -- (t5);
		\draw (t6) -- (t7);

		\draw[matched] (t1) -- (t2);
		\draw[matched] (t3) -- (t4);
		\draw[matched] (t5) -- (t6);
		\draw[matched] (t7) -- (t8);
		
		\draw[matched] (a1) -- (a2);
		\draw[matched] (b1) -- (b2);
		\draw[matched] (c1) -- (c2);
		\draw[matched] (d1) -- (d2);
		\draw[matched] (e1) -- (e2);
		\draw[matched] (f1) -- (f2);
		
		\draw[matched] (a3) -- (b3);
		\draw[matched] (c3) -- (d3);
		\draw[matched] (e3) -- (f3);
		
		\draw[matchedalt] (t2) -- (a1);
		\draw[matchedalt] (t3) -- (b1);
		\draw[matchedalt] (t4) -- (c1);
		\draw[matchedalt] (t5) -- (d1);
		\draw[matchedalt] (t6) -- (e1);
		\draw[matchedalt] (t7) -- (f1);
		
		\draw[matchedalt] (a2) -- (a3);
		\draw[matchedalt] (b2) -- (b3);
		\draw[matchedalt] (c2) -- (c3);
		\draw[matchedalt] (d2) -- (d3);
		\draw[matchedalt] (e2) -- (e3);
		\draw[matchedalt] (f2) -- (f3);
		
	\end{tikzpicture}
	\caption{
	This construction replaces an edge $(l,r)$ and effectively simulates an edge of cost $(k + 1, 0)$,
		here for the sake of simplicity with $k=3$.
	It consists of one path of length $2k + 1$ where all edges have cost $(1, 0)$.
	Every second pair of neighboring vertices (exculding $l$ and $r$) is connected by another extra path of length $2k + 1$.
	We set the cost of the edges in such extra paths to $(0,1)$ and $(1,0)$,
		such that the depicted matchings have cost $0$.
	Any other choice for the matching edges on the top paths directly leads to a total cost that exceeds $k$.
	Hence, the main path can only contain matching edges of the red matching.}
	\label{fig:edgeReplacement}
\end{figure}
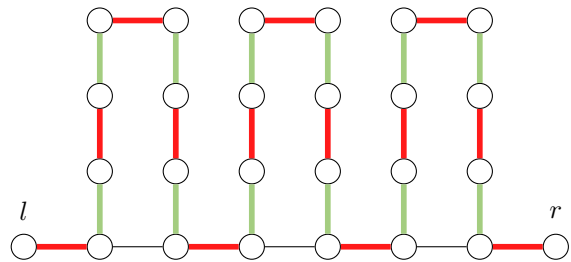

As a final step regarding planar graphs, we avoid vertices of degree $>4$, and we avoid $(0,0)$ cost edges,
	thus showing hardness for costs $(0,1)$ and $(1,0)$.
The key step is to replace a single $(0,1)$ edge by a long path-like gadget that effectively simulates a cost $(0,k+1)$ edge, analogously for a $(1,0)$ edge.
\cref{fig:edgeReplacement} shows such a gadget.
Hence we may allow to have costs of up to $k$ and it still holds that none of the original $(0,1)$ and $(1,0)$ edges may be matched by the matching having non-zero cost.
Then we may change the $(0,0)$ cost edges to cost $(0,1)$ (or equally we may change them to cost $(1,0)$).
Finally, the high degree vertices can be avoided by a simple binary tree construction.

    \begin{corollary}
    	\label{lemma:hardness:low:degree:2:weights}
    	\probl{RecovAP} is $\textup{W}[1]$-hard on planar graphs with maximum degree 4 for parameter $k$ with $(c_1(e), c_2(e)) \in \set{(0, 1), (1,0)}$ for all edges $e$, and unless \textup{ETH} fails it has no $f(k) n^{o(\sqrt{k})}$-time algorithm.
    	\label{col:RecovAP}
	\end{corollary}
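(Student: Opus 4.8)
The plan is to transform the planar instance from the previous corollary---which still uses $(0,0)$ edges and vertices of large degree---into one that only uses the two weight pairs $(0,1)$ and $(1,0)$ and has maximum degree $4$, while preserving both planarity and the parameter value $k=\ell^2$. The transformation proceeds in three stages, and the guiding idea is to create enough \emph{budget headroom} so that the distinguished tuple edges can be made non-free without ever allowing a selection edge to be abused by the wrong matching.

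First I would replace every $(1,0)$ edge by the path-like gadget of \cref{fig:edgeReplacement}, which emulates an edge of effective cost $(k+1,0)$, and symmetrically every $(0,1)$ edge by its colour-swapped analogue emulating $(0,k+1)$. Each gadget is planar, of bounded degree, and internally uses only $(0,1)$ and $(1,0)$ edges. Its crucial property is that it admits exactly two cheap states: one in which the simulated edge is taken by the intended matching and one in which it is not, each of internal cost $0$; every other way of completing a perfect matching through the gadget forces more than $k$ units of cost onto the matching that would use the edge in the forbidden direction. Consequently, once all selection edges are replaced, the allowed total cost may be raised up to $k=\ell^2$ without any selection edge being flipped into the wrong matching.

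With this headroom in place, the second stage recolours each $(0,0)$ tuple edge to $(0,1)$ (the choice $(1,0)$ works symmetrically). An intended solution places the $\ell^2$ distinguished tuple edges into $M_1\cap M_2$, each now contributing cost $1$ to $M_2$, so the optimal objective becomes exactly $\ell^2=k$ while the intersection requirement stays $k=\ell^2$; since no gadget edge ever lies in both matchings, the intersection is still realised solely by tuple edges. The budget $k$ is then spent entirely on these $\ell^2$ edges, leaving nothing to misuse a selection gadget, which would cost $k+1$. The third stage eliminates the remaining high-degree vertices (the selection vertex and the tuple-splitting vertices) by the standard binary-tree replacement, which keeps every vertex inside a tree at degree at most $3$ and leaves the tuple vertices at degree at most $4$; this preserves planarity and neither creates new intersection edges nor changes the cost of the intended solution. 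Correctness in both directions then mirrors the proof of \cref{lemma:standard:hardness}: a \probl{Grid Tiling} solution yields matchings of cost $\ell^2$ and intersection $\ell^2$, and conversely any solution of cost at most $\ell^2$ with intersection at least $\ell^2$ must realise its intersection by exactly one tuple edge per cell, from which the row and column values are read off. Since the reduction is polynomial and $k=\ell^2$ is unchanged, an $f(k)n^{o(\sqrt{k})}$ algorithm would yield an $f(\ell)n^{o(\ell)}$ algorithm for \probl{Grid Tiling}, contradicting ETH.

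The step I expect to be the main obstacle is verifying the cheap-state characterisation of the edge-replacement gadget: I would need a careful case analysis showing that in any perfect matching through the ladder of length-$(2k+1)$ detours, committing a main-path edge to the wrong matching propagates along the attached paths and necessarily accumulates more than $k$ units of forbidden cost, so that the only within-budget completions are the two intended states. A secondary concern is bookkeeping the degrees and the planar embedding simultaneously through all three stages, in particular checking that the binary-tree replacements and the gadget attachments never push a tuple vertex above degree $4$.
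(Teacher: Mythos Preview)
Your proposal is correct and follows essentially the same route as the paper: amplify every $(0,1)$ and $(1,0)$ edge to effective cost $(0,k+1)$ resp.\ $(k+1,0)$ via the ladder gadget of \cref{fig:edgeReplacement}, then recolour the $(0,0)$ tuple edges to $(0,1)$ so that the target objective becomes $k=\ell^2$, and finally flatten the high-degree selection/splitting vertices by a binary-tree replacement. The two concerns you flag---the case analysis certifying that any non-intended completion of the ladder gadget incurs cost exceeding $k$, and the degree/planarity bookkeeping when the three stages are composed---are precisely the points the paper leaves as reader verification, so your identification of them as the substantive checks is apt.
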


In a similar way, we can also show W[1]-hardness for the dual parameter $k' = n-k$,
	hence the problem that asks to have all of the $n$ matching edges of $M_1$ and $M_2$ to coincide except for up to $k$ exceptions.
In robust optimization this parameter is of importance and called the recoverability parameter.
These results then complement the simple XP algorithms as mentioned in the introduction.

\begin{figure}
	\centering
		\centering
		\resizebox{.35\textwidth}{!} {
		\begin{tikzpicture}[every node/.style={draw,circle,fill=white}, smooth]
		\node (l) {};
		\node[right =0.88cm of l] (r) {};
		\node[left of=l] (l2) {};
		\node[left of=l2] (l3) {};
		\node[left of=l3] (l4) {};

		\node[right of=r] (r2) {};
		\node[right of=r2] (r3) {};
		\node[right of=r3] (r4) {};
		
		\node[above of=l] (t2) {};
		\node[above of=t2] (t3) {};
		\node[above of=t3] (t4) {};
		
		\node[below of=r] (b2) {};
		\node[below of=b2] (b3) {};
		\node[below of=b3] (b4) {};
		
		\node[above of=l2, fill=white] (tl) {};
		\node[below of=r2, fill=white] (br) {};
		
		
		\begin{pgfonlayer}{bg}
			\draw[transform canvas={yshift=.11em}, matched, dashed] (l.center) -- (r.center);
			\draw[transform canvas={yshift=-.11em}, matchedalt, dashed] (l.center) -- (r.center);
		\end{pgfonlayer}
		
		\draw[line width=0.1cm, dashed] (l4) -- (l3);
		\doublematchedH{2}{l3}{l2}
		\draw (l2) -- (l);
		
		\draw[line width=0.1cm, dashed] (r4) -- (r3);
		\doublematchedH{2}{r3}{r2}
		\draw (r2) -- (r);

		\draw[line width=0.1cm, dashed] (t4) -- (t3);
		\doublematchedV{2}{t3}{t2}
		\draw (t2) -- (l);
		
		\draw[line width=0.1cm, dashed] (b4) -- (b3);
		\doublematchedV{2}{b3}{b2}
		\draw (b2) -- (r);
%
%
		
		\coordinate (middle1) at ($(l2)!0.7!(l)$);
		\coordinate (middle2) at ($(r2)!0.7!(r)$);
		\begin{pgfonlayer}{bg}
		\draw[matched, transform canvas={xshift=.11em, yshift=.11em}] plot[tension=0.5] coordinates {(tl)([shift={(0, -0.4)}]middle1)([shift={(0, -0.55)}]middle2)(br)};
		\draw[matchedalt, transform canvas={xshift=-.11em, yshift=-.11em}] plot[tension=0.5] coordinates {(tl)([shift={(0, -0.4)}]middle1)([shift={(0, -0.55)}]middle2)(br)};
		
		\draw (l2) -- (l);
		\draw (r2) -- (r);
		\draw (t2) -- (l);
		\draw (b2) -- (r);
		\draw (l2) -- (tl);
		\draw (t2) -- (tl);
		\draw (r2) -- (br);
		\draw (b2) -- (br);
		
		\end{pgfonlayer}
		\end{tikzpicture}
	}
	\phantom{spa}	
		\resizebox{.35\textwidth}{!}{
		\begin{tikzpicture}[every node/.style={draw,circle,fill=white}, smooth]
		\node (l) {};
		\node[right =0.88cm of l] (r) {};
		\node[left of=l] (l2) {};
		\node[left of=l2] (l3) {};
		\node[left of=l3] (l4) {};

		\node[right of=r] (r2) {};
		\node[right of=r2] (r3) {};
		\node[right of=r3] (r4) {};
		
		\node[above of=l] (t2) {};
		\node[above of=t2] (t3) {};
		\node[above of=t3] (t4) {};
		
		\node[below of=r] (b2) {};
		\node[below of=b2] (b3) {};
		\node[below of=b3] (b4) {};
		
		\node[above of=l2, fill=white] (tl) {};
		\node[below of=r2, fill=white] (br) {};
		
		
		\begin{pgfonlayer}{bg}
			\draw[line width=0.1cm, dashed] (l) -- (r);
		\end{pgfonlayer}
		
		\begin{pgfonlayer}{bg}
			\draw[transform canvas={yshift=.11em}, matched, dashed] (l4.center) -- (l3.center);
			\draw[transform canvas={yshift=-.11em}, matchedalt, dashed] (l4.center) -- (l3.center);
		\end{pgfonlayer}
		\draw (l3) -- (l2);
		\draw[matchedalt] (l2) -- (l);
		
		\begin{pgfonlayer}{bg}
		\draw[transform canvas={yshift=.11em}, matched, dashed] (r4.center) -- (r3.center);
		\draw[transform canvas={yshift=-.11em}, matchedalt, dashed] (r4.center) -- (r3.center);
		\end{pgfonlayer}
		\draw (r3) -- (r2);
		\draw[matchedalt] (r2) -- (r);

		\begin{pgfonlayer}{bg}
		\draw[transform canvas={xshift=.11em}, matched, dashed] (t4.center) -- (t3.center);
		\draw[transform canvas={xshift=-.11em}, matchedalt, dashed] (t4.center) -- (t3.center);
		\end{pgfonlayer}
		\draw (t3) -- (t2);
		\draw[matched] (t2) -- (l);
		
		\begin{pgfonlayer}{bg}
		\draw[transform canvas={xshift=.11em}, matched, dashed] (b4.center) -- (b3.center);
		\draw[transform canvas={xshift=-.11em}, matchedalt, dashed] (b4.center) -- (b3.center);
		\end{pgfonlayer}
		\draw (b3) -- (b2);
		\draw[matched] (b2) -- (r);
		
		\draw[matched] (l2) -- (tl);
		\draw[matchedalt] (t2) -- (tl);
		
		\draw[matched] (r2) -- (br);
		\draw[matchedalt] (b2) -- (br);
		
		\coordinate (middle1) at ($(l2)!0.7!(l)$);
		\coordinate (middle2) at ($(r2)!0.7!(r)$);
		\begin{pgfonlayer}{bg}
		\draw plot[tension=0.5] coordinates {(tl)([shift={(0, -0.4)}]middle1)([shift={(0, -0.55)}]middle2)(br)};

		\draw (l2) -- (l);
		\draw (r2) -- (r);
		\draw (t2) -- (l);
		\draw (b2) -- (r);
		\draw (l2) -- (tl);
		\draw (t2) -- (tl);
		\draw (r2) -- (br);
		\draw (b2) -- (br);
		
		\end{pgfonlayer}
		\end{tikzpicture}
	}
	\caption{
	Construction for an edge corresponding to a tuple, as used for the hardness for parameter $n - k$.
	The dashed edges represent paths of length $4\ell^2 + 1$.
	Almost all edges have cost $(0,0)$.	
	The only exceptions are the edges in the two 4-cycles,
		which have cost $(0, 1)$ and $(1, 0)$ such that the cost for the matchings in the right figure is still $0$.}
	\label{fig:tupleGadgetkPrime}
\end{figure}
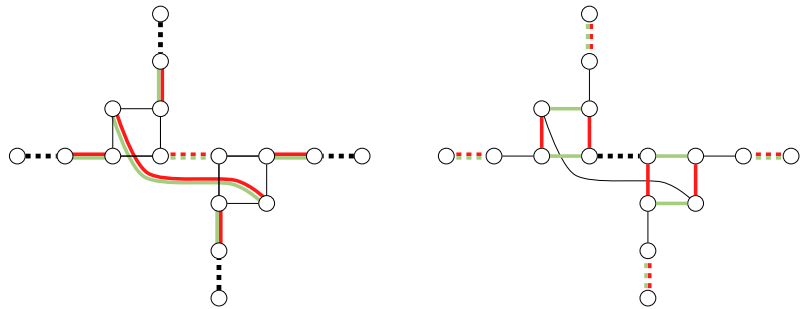

\begin{theorem}\label{lemma:revocap:hardness:recoverability}
	\probl{RecovAP} is $\W[1]$-hard for the recoverability parameter $k' = n-k$ with $(c_1(e), c_2(e)) \in \set{(0,0), (0, 1), (1,0), (1,1)}$ for all edges $e$, and unless \textup{ETH} fails it has no $f(k) n^{o(\sqrt{k})}$-time algorithm.
\end{theorem}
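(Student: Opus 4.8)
The plan is to reduce again from \probl{Grid Tiling}, reusing the row/column selection-and-propagation skeleton from the proof of \cref{lemma:standard:hardness}, but replacing every tuple edge by the gadget in \cref{fig:tupleGadgetkPrime}. The point of the dual parameter is that now the two matchings must \emph{agree} on all but $O(\ell^2)$ edges, so the whole construction has to be padded with structure on which $M_1$ and $M_2$ are forced to coincide. This is the role of the long paths (the dashed connections in \cref{fig:tupleGadgetkPrime}), each of odd length $4\ell^2+1$: a path with an odd number of edges has a unique perfect matching, and in the surrounding graph it can be used in exactly two phases, namely with both endpoints covered inside the path, or with both endpoints matched to the outside and the interior covered by the complementary matching. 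These two phases differ in $\Theta(\ell^2)$ path edges. If both matchings use a padding path in the same phase it contributes nothing to the symmetric difference $M_1 \triangle M_2$, so such paths may be inserted freely to blow up the matching size $n$ (and hence the intersection $k=|M_1\cap M_2|$) to a polynomial value without affecting the recoverability parameter.

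Each tuple gadget has two intended configurations, the left and right parts of \cref{fig:tupleGadgetkPrime}: a \emph{passive} one, in which both matchings run along the central connection identically, and an \emph{active} one, in which the matchings are rerouted through the horizontal and vertical arms and disagree only inside the two constant-size cycles at the helper vertices $tl$ and $br$. As in \cref{lemma:standard:hardness}, the horizontal arms are threaded through the gadgets of a common row and the vertical arms through those of a common column, so that activating a gadget commits to the value of its tuple and the shared arms propagate this commitment; thus all active gadgets of one row agree on $r_i$ and all active gadgets of one column agree on $c_j$. All gadget and arm edges have cost $(0,0)$ except for the cycle edges, which carry costs $(0,1)$ and $(1,0)$ chosen so that \emph{both} configurations have cost $0$; edges added to complete the host graph to $K_{n,n}$ have cost $(1,1)$ and are never used, and the matchings are completed to perfect matchings by helper vertices exactly as before. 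We set the recoverability parameter $k'=n-k$ to the number $\Theta(\ell^2)$ of disagreements caused by exactly one active gadget per cell.

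For the forward direction, a \probl{Grid Tiling} solution is turned into matchings of cost $0$ by activating, in each cell, the gadget of the chosen tuple, leaving all other gadgets passive, and matching every padding path in phase; this gives cost $0$ and exactly $k'$ disagreements. The reverse direction carries the weight of the proof. Given $M_1,M_2$ of cost $0$ with $|M_1\setminus M_2|\le k'$, cost $0$ forces $M_1$ to avoid all edges with $c_1=1$ and $M_2$ to avoid all edges with $c_2=1$. The decisive \textbf{rigidity step} is that no padding path may be used out of phase: a single out-of-phase path already puts $\Theta(\ell^2)$ edges into $M_1\triangle M_2$, and the length $4\ell^2+1$ is chosen so that this exceeds the entire budget $k'$. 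Hence both matchings use every padding path in the same phase, which freezes the global skeleton; on the remaining edges $M_1$ and $M_2$ can then differ only inside the gadget cycles, the budget $k'$ forces exactly one active gadget per cell, and the shared arms force the resulting values $r_1,\dots,r_\ell,c_1,\dots,c_\ell$ to be consistent, i.e.\ a \probl{Grid Tiling} solution.

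I expect the rigidity step to be the main obstacle. One must rule out \emph{every} way of spending the $\le k'$ allowed disagreements, in particular configurations that flip a padding path in order to save disagreements elsewhere, and verify that the two phases of each path interact with the $(0,1)$ and $(1,0)$ cycle edges so that no out-of-phase, cost-$0$ matching can slip through; getting the path length and the budget to match exactly is the delicate part of the bookkeeping. For the \textup{ETH} bound, finally, $k'=\Theta(\ell^2)$ gives $\sqrt{k'}=\Theta(\ell)$, so an $f(k')\,n^{o(\sqrt{k'})}$-time algorithm would solve \probl{Grid Tiling} in time $f(\ell)\,n^{o(\ell)}$, contradicting its \textup{ETH} lower bound.
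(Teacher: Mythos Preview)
Your proposal is essentially the paper's own argument: the same tuple gadget of \cref{fig:tupleGadgetkPrime} with its passive/active states (contributing $0$ versus $4$ disagreements), the same odd paths of length $4\ell^2+1$ to force both matchings into the same phase (also on the subdivided legs of the selection stars, \cref{fig:rowselectkprime}), the budget $k'=4\ell^2$, and the identical ETH bookkeeping; the ``rigidity step'' you flag as delicate is in fact dispatched in one line, since a single out-of-phase path already contributes $4\ell^2+1>k'$ disagreements. The one slip is your remark about completing the matchings ``by helper vertices exactly as before'': the paper explicitly drops these, because with all selection and propagation edges now at cost $(0,0)$ both matchings can simply coincide on them---reinstating the old $(0,1)/(1,0)$ helper edges would by itself exceed the disagreement budget.
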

\begin{proof}
We adapt our hardness reduction from \probl{Grid Tiling} as seen in \cref{lemma:standard:hardness} to suit for the dual parameter $k' = n-k$.
Again, we set the allowed cost to 0.
Where before we had an edge $e$ exclusively in one matching, we change the construction such that $e \in M_1\cap M_2$,
	and vice versa replace an edge corresponding to a tuple by a construction that avoids 4 double edges in case that its corresponding row and column values are selected.
Thus we set $k'= 4\ell^2$.
We no longer need the additional helper vertices as introduced in the original hardness reduction.

We adapt the row selection gadget of the original construction, a star, by subdividing each leg $4\ell^2+1$ times, see \cref{fig:rowselectkprime}; analogously for the column selection.
All such edges have cost $(0,0)$.
If the matching edges of $M_1,M_2$ incident to a vertex $v$ on such paths do not coincide,
	then they do not coincide on two paths of length $4\ell^2 + 1 > k'$.
Thus $M_1,M_2$ have to agree on a common row and column selection gadget.

We replace each edge from the original construction corresponding to a tuple of a grid cell by a gadget as shown in \cref{fig:tupleGadgetkPrime}.
Here, we again use paths of length $4\ell^2+1$, shown as dashed edges in the figure, such that on the edges of such a path the matchings $M_1$ and $M_2$ have to coincide.
Almost all edge costs are $(0,0)$ except for the edges of the two 4-cycles.
There we assign costs $(0,1)$ and $(1,0)$ such that the matchings $M_1$ and $M_2$ as shown in the right of \cref{fig:tupleGadgetkPrime} have cost 0.
It is easy to see that there are only two possible local solutions for this gadget:
Either $M_1,M_2$ are as shown in the left part of the figure,
	which corresponds to a tuple without the selected row and column values;
Or $M_1,M_2$ are as shown in the right part of the figure,
	which corresponds to a tuple $(r_i,c_j)$ with the selected row and column value,
	however with 4 matching edges for each matching that do not coincide with the other matching.
Thus any solution may only have at most $\ell^2$ grid cells with a selected tuple.
Since every grid cell has to have at least one tuple selected, exactly one is chosen.

We add the remaining edges to make the graph complete bipartite with cost $(1,1)$.
Since the allowed cost is 0, they can never be selected by any matching.
\end{proof}

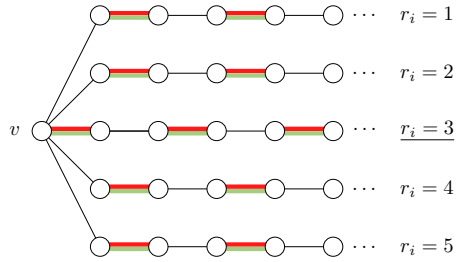
\begin{figure}
	\centering
		\centering
		\resizebox{.45\textwidth}{!}{
		\begin{tikzpicture}[every node/.style={draw,circle,fill=white}]
		
		\node (n1S) {};
		\node[below of=n1S] (n2S) {};
		\node[below of=n2S] (n3S) {};
		\node[below of=n3S] (n4S) {};
		\node[below of=n4S] (n5S) {};
		
		\node[right of=n1S] (n1) {};
		\node[right of=n2S] (n2) {};
		\node[right of=n3S] (n3) {};
		\node[right of=n4S] (n4) {};
		\node[right of=n5S] (n5) {};
	
		\node[right of=n1] (n1S2) {};
		\node[below of=n1S2] (n2S2) {};
		\node[below of=n2S2] (n3S2) {};
		\node[below of=n3S2] (n4S2) {};
		\node[below of=n4S2] (n5S2) {};
		
		\node[right of=n1S2] (n1S3) {};
		\node[below of=n1S3] (n2S3) {};
		\node[below of=n2S3] (n3S3) {};
		\node[below of=n3S3] (n4S3) {};
		\node[below of=n4S3] (n5S3) {};
			
		\node[right of=n1S3, label={right: $\cdots \quad r_i = 1$}] (n14) {};
		\node[right of=n2S3, label={right: $\cdots \quad r_i = 2$}] (n24) {};
		\node[right of=n3S3, label={right: $\cdots \quad \underline{r_i = 3}$}] (n34) {};
		\node[right of=n4S3, label={right: $\cdots \quad r_i = 4$}] (n44) {};
		\node[right of=n5S3, label={right: $\cdots \quad r_i = 5$}] (n54) {};
		
		\node[left of=n3S, label={left:$v$}] (start) {};
		
		\begin{pgfonlayer}{bg}    
		\draw (start.center) --(n1S.center);
		\draw (start.center) --(n2S.center);
		\draw (n3S.center) --(n3.center);
		\draw (start.center) --(n4S.center);
		\draw (start.center) --(n5S.center);
		
		\draw (n1.center) --(n1S2.center);
		\draw (n2.center) --(n2S2.center);
		\draw (n3S.center) --(n3S3.center);
		\draw (n4.center) --(n4S2.center);
		\draw (n5.center) --(n5S2.center);

		\draw (n1S3.center) --(n14.center);
		\draw (n2S3.center) --(n24.center);
		\draw (n4S3.center) --(n44.center);
		\draw (n5S3.center) --(n54.center);

		\doublematchedH{2-}{start}{n3S}
		
		\doublematchedH{3-}{n1S}{n1}
		\doublematchedH{3-}{n2S}{n2}
		\doublematchedH{2-}{n3}{n3S2}
		\doublematchedH{3-}{n4S}{n4}
		\doublematchedH{3-}{n5S}{n5}
		
		\doublematchedH{3-}{n1S2}{n1S3}
		\doublematchedH{3-}{n2S2}{n2S3}
		\doublematchedH{2-}{n3S3}{n34}
		\doublematchedH{3-}{n4S2}{n4S3}
		\doublematchedH{3-}{n5S2}{n5S3}
		
		\end{pgfonlayer}
		\end{tikzpicture}
		}
		\caption{Row selection gadget for the dual parameter $n - k$ (analogously for column selection).
		We reuse the selection gadget for the natural parameter $k$,
			but subdivide the legs of the star $4\ell^2 + 1$ times.
		All edges have cost $(0,0)$.
		If the matching edges of $M_1,M_2$ incident to a vertex $v$ do not coincide,
			then they do not coincide on two paths of length $4\ell^2 + 1 > k'$.
		Thus $M_1,M_2$ agree on a common row and column selection gadget.
		}
		\label{fig:rowselectkprime}
\end{figure}

Ideally, \autoref{lemma:revocap:hardness:recoverability} would translate to planar graphs
	by using a crossing gadget, analogously as for the natural parameter $k$.
However, according to Gurjar et al.~\cite{gurjar2012planarizing}, such a crossing gadget does not exist.

\medskip

On the contrary to planar graphs, we consider graphs of bounded treewidth.
We give a fixed parameter tractable algorithm in the treewidth of the input graph (without the intersection size $k$ as parameter).
Our algorithm is based on dynamic programming over the tree decomposition, and we refer to \cref{sec:tw} for the details.
We are not aware of any Courcelle-like meta-theorem which, in addition to minimizing the cost, allows to model a constraint $|M_1 \cap M_2| \geq k$.

\begin{theorem}\label{thm:tw}
	\probl{RecovAP} is in \textup{FPT} with respect to the treewidth of the input graph.
\end{theorem}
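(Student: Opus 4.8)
The plan is to design a dynamic program over a nice tree decomposition of the input graph $G=(V,E)$ (the graphic version, which is the only setting in which the treewidth parameter is meaningful), treating the two perfect matchings $M_1,M_2$ simultaneously and carrying the size of their intersection as an explicit index of the DP table. First I would compute a nice tree decomposition of width $w$ equal to, or within a constant factor of, the treewidth using the standard FPT machinery, and convert it into the variant in which every edge of $G$ is introduced exactly once (leaf, introduce-vertex, introduce-edge, forget, and join nodes). The crucial design decision is the state: for a node $t$ with bag $X_t$, a state is a pair $(s,j)$ where $s\colon X_t\to\set{0,1}^2$ records, for each bag vertex, whether it is already covered by an edge of $M_1$ and whether it is already covered by an edge of $M_2$ among the edges introduced in the subtree rooted at $t$, and $j\in\set{0,1,\dots,n}$ records how many of those edges lie in $M_1\cap M_2$. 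The table entry $T_t[s,j]$ stores the minimum partial cost $c_1(M_1)+c_2(M_2)$ over all choices consistent with $(s,j)$ in which every vertex already forgotten below $t$ is covered by both matchings.

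Next I would spell out the transitions. At an introduce-vertex node the new vertex starts in state $(0,0)$. At an introduce-edge node $e=(u,v)$ there are four options, namely $e$ in neither matching, in $M_1$ only, in $M_2$ only, or in both; each is admissible only when $u,v$ are still uncovered in the relevant matching(s). The option ``both'' increments $j$ and adds $c_1(e)+c_2(e)$, the single-sided options add the corresponding cost, and I take the pointwise minimum over the admissible options. At a forget-vertex node the forgotten vertex must have reached state $(1,1)$, since it can no longer receive an edge; this enforces that both matchings are eventually perfect. At a join node I combine the two child tables: for each bag vertex the child statuses must be combinable without covering a vertex twice in the same matching (so the coordinatewise product of the two four-valued status vectors vanishes), the combined status is the coordinatewise maximum, and, because each edge is introduced exactly once so the two subtrees contribute disjoint edge sets, the intersection sizes add, giving a $(\min,+)$-convolution $T_t[s,j]=\min_{j_1+j_2=j}\bigl(T_{t_1}[s_1,j_1]+T_{t_2}[s_2,j_2]\bigr)$ over all compatible pairs $(s_1,s_2)$. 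The optimum is read off at the root as the minimum of $T_{\mathrm{root}}[\,\cdot\,,j]$ over all $j\ge k$.

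For the running time, each node holds $4^{w+1}\cdot(n+1)$ entries; the introduce and forget nodes are processed in time linear in the table size, while a join node costs $O(9^{w+1}\cdot n^2)$, since each bag vertex admits at most nine compatible status pairs ($1+2+2+4$, according to the combined status) and the convolution over $j$ contributes the factor $n^2$. Summing over the $O(w\cdot\lvert V\rvert)$ nodes of the decomposition yields a total running time of $2^{O(w)}\cdot n^{O(1)}$, which places \probl{RecovAP} in \textup{FPT} for the parameter treewidth.

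I expect the join node to be the main obstacle and the place demanding the most care. I must argue that disjointness of the two subtrees' edge sets makes the intersection counts genuinely additive, so that the global constraint $\lvert M_1\cap M_2\rvert\ge k$ is faithfully tracked by a single additive index rather than by remembering the actual set of common edges, and I must verify that the coordinatewise combination of the four-valued statuses never produces a vertex matched twice within $M_1$ or within $M_2$. Establishing correctness then reduces to the usual bottom-up induction over the node types, showing that $T_t[s,j]$ equals the true partial optimum.
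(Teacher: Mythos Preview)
Your proposal is correct and follows the same overall scheme as the paper---a dynamic program over a nice tree decomposition that carries the intersection count as an additive index---but the encoding of the local state is genuinely different. The paper works with the classical nice tree decomposition (no introduce-edge nodes) and stores, for each bag, a pair of \emph{preliminary matchings}: each bag vertex is recorded as unmatched, as matched to a specific other bag vertex, or as already matched to something outside the bag. Because the pairing inside the bag must be remembered explicitly, the number of states per bag is bounded only by $2^{O(w\log w)}$, and the paper's running time carries this $w^{O(w)}$ factor. Your approach instead uses introduce-edge nodes and records merely the two coverage bits per bag vertex, yielding $4^{w+1}$ states and a single-exponential $2^{O(w)}$ dependence; the $9^{w+1}$ join analysis you give is the standard refinement. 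So your route is both cleaner and asymptotically faster. The paper bounds the intersection index by $k$ rather than by $n$, but this is immaterial; otherwise the handling of the global constraint (additivity at joins, thanks to disjoint edge sets in the two subtrees) is the same in both arguments.
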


\section{Monge and Anti-Monge Matrices}\label{sec:monge}
\newcommand{\cyclic}{cyclic\xspace}

%
%

In this section we develop a polynomial time algorithm for the special case of 
\probl{RecovAP} if the cost function $c_1$ is given by a Monge matrix $A = (a_{i,j}) \in \rr^{n \times n}$ and the 
cost function of $c_2$ is given by an Anti-Monge matrix $ B = (b_{i,j}) \in \rr^{n \times n}$.
The matrix $A$ is called a Monge matrix if for all $i < k$ and $j < l$ it holds that
	$a_{i,j} + a_{k,l} \leq a_{i,l} + a_{k,j}$. Analogously, the matrix $B$ is called an Anti-Monge matrix 
if for all $i < k$ and $j < l$ it holds that 
$b_{i,j} + b_{k,l} \geq b_{i,l} + b_{k,j}$.
Let $U = \{u_1, \dots, u_n\}$ and $V = \{v_1, \dots, v_n\}$ be the bipartition of the vertex 
set of $K_{n,n}$. Then the cost of edge $\{u_i, v_j\}$ is given by $c_1(\{u_i, u_j\}) = a_{i,j}$
and $c_2(\{u_i, v_j\}) = b_{i,j}$. 

Note, that it is a well-known 
result that if the cost function of the \probl{AP} is given by a Monge matrix then the diagonal $\{ \{u_i, v_i\} \colon i=1, \dots, n \}$ is an optimal solution, similarly the anti-diagonal for Anti-Monge matrices.

Surprisingly, for these special cost functions also the \probl{RecovAP} has an optimal solution of a similar combinatorial structure, as we show in the following.
We illustrate solutions in matrix form by highlighting entry 
$(i,j)$ with a square if $\{u_i,v_j\} \in M_1$ and with a cross if $\{u_i,v_j\} \in M_2$.
See \cref{fig:monge1} for an illustration of the combinatorial structure of optimal solutions 
as claimed in \cref{thm:mongestructure}.

\begin{theorem}\label{thm:mongestructure}
	Let $c_1$ be given by a Monge matrix $A$ and $c_2$ be given by an Anti-Monge matrix $B$.
	Then, there exists an optimal solution $M_1, M_2$ to the \probl{RecovAP} such that
	\begin{enumerate}
		\item $\{u_i, v_i\} \in M_1$ for all $i=1,\dots,\lfloor \frac{n-k}{2} \rfloor$ and $i= \lceil \frac{n+k}{2} \rceil, \dots, n$,
		\item $\{u_i, v_{n+1-i}\} \in M_2$ for all $i=1,\dots,\lfloor \frac{n-k}{2} \rfloor$ and $i= \lceil \frac{n+k}{2} \rceil, \dots, n$,
		\item $M_1 \cap M_2$ is an optimal solution to the \probl{AP} with cost $c_1+c_2$ on the complete bipartite subgraph induced by the sets of vertices 
		$\{u_i \mid i = \lfloor \frac{n-k}{2} \rfloor+1, \dots, \lceil \frac{n+k}{2} \rceil-1\}$ and 
		$\{v_i \mid i = \lfloor \frac{n-k}{2} \rfloor+1, \dots, \lceil \frac{n+k}{2} \rceil-1\}$. 
	\end{enumerate}
\end{theorem}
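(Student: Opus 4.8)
The plan is to start from an arbitrary optimal solution $(M_1,M_2)$ and to reshape it, through a sequence of cost--non-increasing local exchanges, into the claimed canonical form. Throughout I identify $M_1$ with a permutation $\sigma_1$ and $M_2$ with a permutation $\sigma_2$, where $\{u_i,v_{\sigma_1(i)}\}\in M_1$ and $\{u_i,v_{\sigma_2(i)}\}\in M_2$, and I call a row $i$ an \emph{agreement row} if $\sigma_1(i)=\sigma_2(i)$ and a \emph{disagreement row} otherwise; the agreement rows index exactly the common edges $M_1\cap M_2$, whose number must stay at least $k$. The objective reads $\sum_{i=1}^n a_{i,\sigma_1(i)}+\sum_{i=1}^n b_{i,\sigma_2(i)}$, so a common edge $\{u_i,v_j\}$ contributes $a_{i,j}+b_{i,j}$ while a non-common edge contributes only its own cost. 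In these terms the three items assert that, after reshaping, the disagreement indices are the $n-k$ outermost rows and columns, that $M_1$ runs along the main diagonal and $M_2$ along the secondary diagonal on those outer indices, and that the remaining central block carries the common edges and solves the assignment problem for the matrix $A+B$ on that block.

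First I would uncross each matching on its non-common part. Among all optimal solutions I pick one that additionally minimizes the number of inversions of $\sigma_1$ plus the number of \emph{co-inversions} of $\sigma_2$ (pairs $i<i'$ with $\sigma_2(i)<\sigma_2(i')$). If two disagreement rows $i<i'$ satisfy $\sigma_1(i)>\sigma_1(i')$, then swapping their two $M_1$-edges does not increase $c_1$ by the Monge inequality for $A$, leaves $\sigma_2$ and hence every common edge elsewhere untouched, and strictly decreases the inversion count, contradicting the choice of solution; thus $\sigma_1$ is increasing on the disagreement rows. The symmetric argument with the Anti-Monge inequality for $B$ shows that $\sigma_2$ is decreasing there. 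Since the two swaps edit disjoint data ($\sigma_1$ resp.\ $\sigma_2$), both properties hold simultaneously, so on the disagreement rows $M_1$ realizes the order-preserving bijection onto the disagreement columns and $M_2$ the order-reversing one. The only subtlety is that such a swap may turn a disagreement row into an agreement row, but this can only increase $|M_1\cap M_2|$ and never violates the constraint.

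The second step, positioning the agreement indices in the center, is the genuinely hard part. Using that transposes of Monge resp.\ Anti-Monge matrices are again Monge resp.\ Anti-Monge, the transposed version of Step~1 yields the analogous sorted structure for columns, which is what one needs to identify the disagreement columns with the disagreement rows and to make the outer $M_1$-edges exactly $(i,i)$ and the outer $M_2$-edges exactly $(i,n+1-i)$. I would then proceed by induction on $n-k$, peeling off the outermost symmetric pair: assuming $n-k\ge 2$, I claim some optimal solution makes rows $1$ and $n$ disagreement rows with $\sigma_1(1)=1,\ \sigma_2(1)=n,\ \sigma_1(n)=n,\ \sigma_2(n)=1$, after which rows and columns $2,\dots,n-1$ form a \probl{RecovAP} instance with the same $k$ on the Monge/Anti-Monge submatrices; the base cases $n-k\in\{0,1\}$ fix how the outer indices split between a top and a bottom part, and thus the parity-dependent boundary indices of the statement. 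The crux is this peeling claim: the naive Monge exchange that drags the $M_1$-edge of row $1$ onto the corner entry $(1,1)$ may destroy a common edge and push $|M_1\cap M_2|$ below $k$, so the two matchings cannot be sorted independently. I would handle this by viewing the non-common part of $M_1\cup M_2$ as a disjoint union of $M_1/M_2$-alternating cycles and by designing combined exchanges along such a cycle that move $M_1$ toward the diagonal and $M_2$ toward the anti-diagonal \emph{in tandem}, so that the Monge and Anti-Monge inequalities together keep the total cost non-increasing while the set of common edges—and in particular its cardinality—is preserved exactly. This coupling of the two matchings and the bookkeeping required to protect the intersection is where I expect the main difficulty to lie.

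Finally, once the outer layers are fixed as the diagonal for $M_1$ and the anti-diagonal for $M_2$ and all common edges are confined to the central block, the third item is immediate. The cost contributed by that block equals $\sum_{e\in M_1\cap M_2}\bigl(c_1(e)+c_2(e)\bigr)$, and these edges may be chosen as an arbitrary perfect matching of the central complete bipartite subgraph; hence an optimal choice is precisely a minimum-cost assignment for the matrix $A+B$ restricted to that block. Replacing the central common matching by such an optimum can only decrease the objective and leaves both the intersection size and the already-fixed outer structure intact, which completes the argument.
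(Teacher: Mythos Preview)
Your Step~1 is correct and, in the paper's language, already delivers the nested aligned $4$-cycle structure: once $\sigma_1$ is increasing and $\sigma_2$ is decreasing on the disagreement rows $r_1<\dots<r_m$ (which forces $m$ to be even, since otherwise the middle row would satisfy $\sigma_1=\sigma_2$), the alternating cycles of $M_1\cup M_2$ on the disagreement part are precisely the $4$-cycles $(r_i,c_i,r_{m+1-i},c_{m+1-i})$. The paper reaches the same point by a slightly different device---it deletes the agreement rows and columns first, sorts the remaining Monge/Anti-Monge submatrix into nested $4$-cycles from the outside in, and then reinserts the common edges---but your inversion-minimization argument is an equally valid route to this intermediate structure.

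The genuine gap is Step~2, which you correctly flag as the crux but do not carry out. Your peeling-from-the-outside plan is awkward for exactly the reason you name: making row~$1$ a disagreement row when it currently carries a common edge forces you to relocate that edge, and you never say where or why this is cost-free. The paper's solution goes in the opposite direction and is much cleaner: leave the nested $4$-cycles where they are and push each $2$-cycle $(x,y)$ \emph{inward}, past the outermost $4$-cycle $(i,j,i',j')$ that does not yet contain it. Up to the four symmetries there are only two configurations---the $2$-cycle lies to one cardinal side of the $4$-cycle (e.g.\ $i<x<i'$ and $y>j'$), or it lies in a diagonal corner (e.g.\ $x>i'$ and $y>j'$)---and in each case one or two explicit Monge/Anti-Monge swaps replace the pair by a strictly larger $4$-cycle that encloses a new $2$-cycle, at no extra cost and with $|M_1\cap M_2|$ preserved exactly. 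This is precisely the ``combined exchange'' you gesture at, but the key simplification you are missing is that after Step~1 every alternating cycle is already a $4$-cycle, so the exchange is purely local (one $4$-cycle and one $2$-cycle at a time) rather than along some long cycle. Iterating pushes all $2$-cycles inside the innermost $4$-cycle; a short parity argument then pins the number of $4$-cycles at $\lfloor(n-k)/2\rfloor$, and your Step~3 finishes the proof.
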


\newcommand{\herescale}{3.6}
\newcommand{\hereratio}{0.45625}

\begin{figure}[t]
\captionsetup{position=b}
\begin{tikzpicture}[scale=1]
\begin{scope}[local bounding box=boxA,scale={\herescale}, every node/.style={draw, RWTH_green100, minimum width=5pt, minimum height=5pt}, smooth]

			\coordinate (bl) at (0,0);
			\coordinate (tl) at (1,0);
			\coordinate (br) at (0,1);
			\coordinate (tr) at (1,1);

			\node[draw=none,RWTH_red100] at (0.05, 0.05) {$\times$};
			\node[draw=none,RWTH_red100] at (0.11, 0.11) {$\times$};
			\node[draw=none,RWTH_red100] at (0.17, 0.17) {$\times$};
			\node[draw=none,RWTH_red100] at (0.83, 0.83) {$\times$};
			\node[draw=none,RWTH_red100] at (0.89, 0.89) {$\times$};
			\node[draw=none,RWTH_red100] at (0.95, 0.95) {$\times$};

			\node[] at (0.05, 0.95) {};
			\node[] at (0.11, 0.89) {};
			\node[] at (0.17, 0.83) {};
			\node[] at (0.95, 0.05) {};
			\node[] at (0.89, 0.11) {};
			\node[] at (0.83, 0.17) {};

			\draw[line width=.25mm, white] (bl.south west) -- (br.south east) -- (tr.north east) -- (tl.north west) --  (bl.south west);
			\draw[line width=.25mm] (bl.south west) -- (br.south east) -- (tr.north east) -- (tl.north west) --  (bl.south west);
			
			\node[draw=none, outer sep=0.05cm] (innertl) at (0.24,0.76) {};
			\node[outer sep=0.05cm] (innerbr) at (0.76,0.24) {};
			\node[draw=none, inner sep=-0.0cm, outer sep=0.05cm, RWTH_red100] (innerbl) at (0.24,0.24) {$\times$};
			\node[draw=none, inner sep=-0.0cm, outer sep=0.05cm, RWTH_red100] (innertr) at (0.76,0.76) {$\times$};
			

			\draw[RWTH_tuerkis100, line width=.25mm] (innerbl.south west) -- (innerbr.south east) -- (innertr.north east) -- (innertl.north west) --  (innerbl.south west);

			\node[] (innertlalt) at (0.24,0.6) {};
			\node[draw=none, inner sep=-0.0cm, outer sep=0.05cm, RWTH_red100] at (0.65,0.6) {$\times$};
			\node[] at (0.4,0.3) {};
			\node[draw=none, inner sep=-0.0cm, outer sep=0.05cm, RWTH_red100] at (0.4,0.3) {$\times$};
			\node[] (inneralt2) at (0.5,0.76) {};
			\node[draw=none, inner sep=-0.0cm, outer sep=0.05cm, RWTH_red100]  at (0.5,0.4) {$\times$};
			\node[]  at (0.65,0.4) {};
			\node[draw=none] (innerbr) at (0.5,0.6) {};
			
			\draw[RWTH_orange100, ->, thick] (innertlalt) -- (innertl.center);
			\draw[RWTH_orange100, ->, thick] (inneralt2) -- (innerbr);

	\end{scope}
	\node[xshift=-.25cm, yshift=-1.8cm] at (boxA.west) {\textbf{(a)}};
	\begin{scope}[local bounding box=boxB,xshift=4.5cm, yshift=\herescale cm, scale={\herescale*\hereratio}, every node/.style={draw, RWTH_green100, minimum width=5pt, minimum height=5pt}, smooth]

		\draw[line width=.25mm, white] (0,0) -- (2.2,0) -- (2.2,-2.2) -- (0,-2.2) -- cycle;
		\draw[line width=.25mm] (0,0) -- (2.2,0) -- (2.2,-2.2) -- (0,-2.2) -- cycle;
		\node[] at (2,-1) {};
		\node[draw=none, RWTH_red100] at (2,-1) {$\times$};
		
		\node[] at (1.5,-1.5) {};
		\node[] at (0.5,-0.5) {};
		
		\node[draw=none, RWTH_red100] at (1.5,-0.5) {$\times$};
		\node[draw=none, RWTH_red100] at (0.5,-1.5) {$\times$};

		

		\draw[RWTH_orange100, ->, thick] (2,-1) -- (2,-0.5);
		\draw[RWTH_orange100, ->, thick] (1.5,-0.5) -- (1.5,-1);
		
		\draw[RWTH_tuerkis100, ->, thick] (2,-1) -- (2,-1.5);
		\draw[RWTH_tuerkis100, ->, thick] (1.5,-1.5) -- (1.5,-1);

	\end{scope}
	\node[xshift=-.35cm, yshift=-1.8cm] at (boxB.west) {\textbf{(b)}};
	\begin{scope}[local bounding box=boxC, yshift=\herescale cm, xshift=9cm, scale=\herescale*\hereratio, every node/.style={draw, line width=.8pt, RWTH_green100, minimum width=5pt, minimum height=5pt}, smooth]

		\draw[line width=.25mm, white] (0,0) -- (2.2,0) -- (2.2,-2.2) -- (0,-2.2) -- cycle;
		\draw[line width=.25mm] (0,0) -- (2.2,0) -- (2.2,-2.2) -- (0,-2.2) -- cycle;
		
		\draw[line width=.25mm] (0,0) -- (2.2,0) -- (2.2,-2.2) -- (0,-2.2) -- cycle;
		\node[] at (2,-2) {};
		\node[draw=none, RWTH_red100] at (2,-2) {$\times$};
		
		\node[] at (1.5,-1.5) {};
		\node[] at (0.5,-0.5) {};
		
		\node[draw=none, RWTH_red100] at (1.5,-0.5) {$\times$};
		\node[draw=none, RWTH_red100] at (0.5,-1.5) {$\times$};
		

		\draw[RWTH_orange100, ->, thick] (2,-2) -- (2,-1.5);
		\draw[RWTH_orange100, ->, thick] (0.5,-1.5) -- (0.5,-2);
		
		\draw[RWTH_tuerkis100, ->, thick] (2,-1.5) -- (2,-0.5);
		\draw[RWTH_tuerkis100, ->, thick] (1.5,-0.5) -- (1.5,-1.5);

	\end{scope}
	\node[xshift=-.35cm, yshift=-1.8cm] at (boxC.west) {\textbf{(c)}};
\end{tikzpicture}
\caption{
Illustrations for the Monge and Anti-Monge case.
(a)
Modification to ensure that no cycles have length larger than four.
(b)
Modification to move $2$-cycles east of a $4$-cycle into a $4$-cycle.
(c)
Modification to move $2$-cycles southeast of a $4$-cycle into a $4$-cycle.
}
\label{fig:monge1}
\label{fig:monge2}
\label{fig:monge3}
\end{figure}

Based on this structural result we can easily compute an optimal solution for \probl{RecovAP} by solving the 
instance of the \probl{AP} on the subgraph stated in point~3 of \cref{thm:mongestructure} and 
then completing the perfect matchings $M_1$ and $M_2$ as stated in points~1 and~2. In summary,
we obtain the following result.
\begin{theorem}\label{thm:mongealg}
	Let $c_1$ be given by a Monge matrix $A$ and $c_2$ be given by an Anti-Monge matrix $B$.
	Then the \probl{RecovAP} can be solved in $O(n + k \log k)$ time.
\end{theorem}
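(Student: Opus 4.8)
The plan is to read \cref{thm:mongestructure} as a recipe that pins down an optimal pair $M_1, M_2$ up to its behaviour on a small central block, so that the algorithm only has to (i)~emit the prescribed outer edges and (ii)~solve a single assignment problem on that block. First I would compute the two boundary indices $\lfloor (n-k)/2\rfloor$ and $\lceil(n+k)/2\rceil$ in constant time. By points~1 and~2 of \cref{thm:mongestructure}, on the two outer index ranges $M_1$ is forced onto the main diagonal and $M_2$ onto the secondary diagonal. These $O(n)$ edges are determined purely by the index structure and require no access to the cost matrices $A$ or $B$, so writing them down costs $O(n)$ time and no matrix look-ups.

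The only genuine computation is point~3: an optimal assignment, under the combined cost $c_1 + c_2 = A + B$, on the central $\Theta(k)\times\Theta(k)$ block, where $M_1$ and $M_2$ coincide and thereby supply the required $k$ common edges. Since $A$ is Monge and $B$ is Anti-Monge, their sum is in general neither, so a black-box Hungarian call would cost $O(k^3)$ and, worse, even reading all $\Theta(k^2)$ block entries would already break the target bound. The key is to exploit the combinatorial structure uncovered in the proof of \cref{thm:mongestructure} and illustrated in \cref{fig:monge1}: after the reductions that forbid cycles of length exceeding four and that merge $2$-cycles into $4$-cycles, I expect the central optimum to be an involution built from centrally symmetric transpositions together with diagonal fixed points. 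Concretely, for each of the $\lfloor k/2\rfloor$ mirror pairs of central indices one only decides between keeping both edges on the diagonal (cheap for the Monge part $A$) and crossing them onto the secondary diagonal (cheap for the Anti-Monge part $B$); each such decision is governed by the crossing gain $(a_{ii}+a_{jj}+b_{ii}+b_{jj}) - (a_{ij}+a_{ji}+b_{ij}+b_{ji})$, which reads only $O(1)$ entries. Hence only $O(k)$ matrix entries are ever inspected.

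To turn this into the claimed bound I would collect the $\Theta(k)$ crossing gains, sort them, and select the beneficial crossings in the order dictated by the exchange arguments behind \cref{fig:monge1}; the sort dominates at $O(k \log k)$, and emitting the resulting central edges is $O(k)$. Adding the $O(n)$ cost for the outer diagonal and secondary diagonal yields the total $O(n + k \log k)$, while correctness is inherited directly from \cref{thm:mongestructure}.

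The main obstacle I anticipate lies in the second paragraph: faithfully extracting, from the proof of \cref{thm:mongestructure} rather than from its bare statement, the fact that the central assignment optimum is realised by a canonical family of symmetric $4$-cycles and diagonal fixed points, and that the optimal selection among them is determined by a single sort. The statement of \cref{thm:mongestructure} asserts only that $M_1 \cap M_2$ is \emph{some} optimal central assignment; the efficient algorithm additionally needs the explicit shape of that assignment. Consequently the careful bookkeeping of the floor/ceil boundaries, the parity of $k$, and the precise ordering that makes the greedy selection provably optimal are exactly where the effort is required.
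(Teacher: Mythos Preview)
Your high-level plan---read off the outer diagonal/anti-diagonal edges from points~1 and~2 of \cref{thm:mongestructure} in $O(n)$ time and then solve the central $\Theta(k)\times\Theta(k)$ assignment problem for $c_1+c_2$---is exactly what the paper does. In fact the paper's entire ``proof'' of \cref{thm:mongealg} is the single sentence preceding the theorem statement: compute the central \probl{AP} and complete $M_1,M_2$ as prescribed. No further justification of the $O(k\log k)$ term is given in the paper.

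Where your proposal goes beyond the paper is in trying to \emph{justify} the $O(k\log k)$ bound for the central block, and this is where it breaks. You conjecture that the optimal central assignment is an involution assembled from centrally symmetric transpositions and diagonal fixed points, to be selected by sorting $O(k)$ ``crossing gains''. This cannot be true in general: every matrix $C$ can be written as a sum of a Monge matrix and an Anti-Monge matrix (take $A_{ij}=-\lambda ij$ for large $\lambda$, which is Monge, and $B=C-A$, which becomes Anti-Monge once $\lambda$ dominates the Monge discrepancies of $C$). Hence the central \probl{AP} with cost $A+B$ is, up to this decomposition, the \emph{general} assignment problem on a $\Theta(k)\times\Theta(k)$ matrix. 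In particular one can cook up $A,B$ so that the unique optimum on the central block is a $3$-cycle, or any other non-involution, so no sort of $O(k)$ pairwise gains can possibly recover it. The exchange arguments behind \cref{lem:nolongcyc} and \cref{lemma:cycles:are:nested} constrain the $M_1$--$M_2$ alternating cycles of the \emph{pair} $(M_1,M_2)$; they say nothing about the single matching $M_1\cap M_2$ on the central block beyond what point~3 of \cref{thm:mongestructure} already states.

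In short: your outline matches the paper's, your worry in the final paragraph is exactly the right worry, and the mechanism you propose to resolve it does not work. The paper itself does not supply a mechanism either; it simply asserts the bound.
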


In the following we prepare the proof of \cref{thm:mongestructure} based on several structural lemmas.
The main tool to analyze feasible solutions $M_1, M_2$ is their decomposition into 
$M_1$-$M_2$-alternating cycles of even length. For any even number $s$ we call such an 
alternating cycle an $s$-cycle. Using this language, we call an edge $e \in M_1 \cap M_2$ a $2$-cycle.
A $4$-cycle consists of four edges $\{ u_{i}, v_{j} \}, \{u_{i'}, v_{j'} \} \in M_1$ and 
$\{ u_{i}, v_{j'} \}, \{u_{j'}, v_{i} \} \in M_2$. Note that by the fact that $A$ is Monge 
and $B$ is Anti-Monge it is always optimal that $i < i'$ and $j < j'$. We call a $4$-cycle fulfilling this 
property an aligned $4$-cycle. We identify the 
$4$-cycle with its indices $(i,j,i',j')$. Similarly, we identify a $2$-cycle $\{v_i, v_j\} \in M_1 \cap M_2$ with its indices $(i,j)$.
Also, observe that 
in our matrix visualization $4$-cycles correspond to $2 \times 2$ submatrices where the corners 
diagonal to each other are marked by squares and crosses. The fact that $i<i'$ and $j<j'$ 
implies that the squares are drawn into the northwest and southeast corner and the crosses are drawn into 
the northeast and southwest corners.
We say that a $4$-cycle $(i_1, j_1, i'_1, j'_1)$ is nested inside another $4$-cycle $(i_2, j_2, i'_2, j'_2)$
if it holds that $i_2 < i_1 < i'_1 < i'_2$ and $j_2 < j_1 < j'_1 < j'_2$.
Analogously we say that a $2$-cycle $(i_1, j_1)$ is nested inside a $4$-cycle $(i_2, j_2, i'_2, j'_2)$
if $i_2 < i_1 < i'_2$ and $j_2 < j_1 < j'_2$.

Note, that in the language of such cycles \cref{thm:mongestructure} is equivalent to: 
there exists an optimal solution which consists of $\lfloor \frac{n-k}{2} \rfloor$ many aligned $4$-cycles and all the other matching edges are $2$-cycles; 
all $4$-cycles are nested into each other and the $2$-cycles are nested inside the innermost $4$-cycle; the $2$-cycles form a minimum 
cost perfect matching with respect to $c_1 + c_2$ on their vertices.

In \cref{lem:nolongcyc} we prove that there always exists an optimal solution without $s$-cycles for $s > 4$,
and all the $4$-cycles are nested and aligned.
The main idea here is to iteratively remove such long cycles from the outside to the inside.
In a second step (\cref{lemma:cycles:are:nested}) we then show that there exists an optimal solution in which 
all $2$-cycles lie inside the innermost $4$-cycle.

\begin{lemma}\label{lem:nolongcyc}
Let perfect matchings $M_1,M_2$ be feasible solutions to \probl{RecAP}.
Then there exists a solution $M_1',M_2'$ consisting only of 2-cycles and aligned 4-cycles, 
and all the $4$-cycles are nested.
\end{lemma}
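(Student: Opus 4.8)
The plan is to leave every edge of $M_1 \cap M_2$ untouched and to re-pair only the vertices spanned by the symmetric difference $D := M_1 \triangle M_2$, replacing its pairing by the two \emph{independently} cheapest matchings on those vertices. Write $I = \{i_1 < \dots < i_m\}$ for the indices of the $U$-vertices incident to $D$ and $J = \{j_1 < \dots < j_m\}$ for the indices of the $V$-vertices incident to $D$; since $M_1,M_2$ are perfect matchings we have $|I| = |J| = m$, and restricted to $D$ each of $M_1, M_2$ is a bijection $I \to J$ with no agreements (every edge of $M_1 \cap M_2$ lies outside $D$ and is a $2$-cycle). I would then define $\tau_1$ to match $u_{i_p}$ with $v_{j_p}$ (the aligned/diagonal pairing) and $\tau_2$ to match $u_{i_p}$ with $v_{j_{m+1-p}}$ (the crossing/anti-diagonal pairing), and set $M_1' = (M_1 \setminus D) \cup \tau_1$ and $M_2' = (M_2 \setminus D) \cup \tau_2$. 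These are perfect matchings, because $\tau_1, \tau_2$ are bijections $I \to J$ while the untouched edges (the old $2$-cycles) cover exactly the remaining vertices.

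The key step is the cost bound. The submatrix of $A$ indexed by $I \times J$ is again Monge and the submatrix of $B$ indexed by $I \times J$ is again Anti-Monge; hence, by the well-known fact recalled above that the diagonal is optimal for a Monge assignment problem and the anti-diagonal for an Anti-Monge one, $\tau_1$ minimizes $c_1$ over all bijections $I \to J$ and $\tau_2$ minimizes $c_2$ over all bijections $I \to J$. Since the $c_1$-cost of $M_1'$ and the $c_2$-cost of $M_2'$ depend on these two bijections separately, we obtain $c_1(M_1') + c_2(M_2') \le c_1(M_1) + c_2(M_2)$, so optimality is not destroyed.

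It then remains to read off the cycle structure and confirm feasibility. Tracing $M_1' \triangle M_2'$ shows it decomposes into the $4$-cycles $(i_p, j_p, i_{m+1-p}, j_{m+1-p})$ for $p = 1, \dots, \lfloor m/2 \rfloor$, together with the central $2$-cycle $(i_{(m+1)/2}, j_{(m+1)/2})$ when $m$ is odd. Each such $4$-cycle is aligned, because $i_p < i_{m+1-p}$ and $j_p < j_{m+1-p}$ place the $M_1'$-edges on the northwest--southeast diagonal, and the cycles are nested, because the index intervals shrink as $p$ grows. Thus $M_1', M_2'$ consists only of $2$-cycles and nested aligned $4$-cycles. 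Feasibility is preserved since every edge of $M_1 \cap M_2$ survives and at most one new $2$-cycle is created, whence $|M_1' \cap M_2'| \ge |M_1 \cap M_2| \ge k$.

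The main obstacle, and really the heart of the argument, is the observation that the two independent optimizations \emph{interlock}: minimizing $c_1$ pulls $M_1'$ onto the diagonal while minimizing $c_2$ pulls $M_2'$ onto the anti-diagonal, and it is precisely the Monge / Anti-Monge pair of conditions that forces these two extremal matchings to combine into aligned, mutually nested $4$-cycles rather than into long cycles. The only points that need care are checking that passing to the $I \times J$ submatrices preserves the Monge and Anti-Monge properties and verifying the cycle decomposition and nesting in detail; both are routine once the set-up is fixed. (Equivalently, one could reach the same conclusion by the more local procedure hinted at in the statement, repeatedly splitting off the outermost aligned $4$-cycle of a long cycle, but the global re-pairing is cleaner.)
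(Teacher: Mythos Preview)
Your argument is correct and follows essentially the same approach as the paper: both set aside the existing $2$-cycles, work on the remaining $I\times J$ submatrix (which stays Monge/Anti-Monge), and replace the two matchings there by the diagonal and the anti-diagonal, whose union then decomposes into the nested aligned $4$-cycles. The only difference is presentational: the paper builds the nested $4$-cycles one shell at a time via local Monge/Anti-Monge swaps, whereas you invoke the global optimality of the (anti-)diagonal in one step---as you yourself note in your final parenthetical remark.
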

\begin{proof}
	As a first step consider the subinstance (submatrices) where all vertices (rows and columns) 
	contained in $2$-cycles are removed.  Note, that this way the submatrices of $A$ and $B$ remain 
	Monge and Anti-Monge.
	
	Now assume that for $l = 1,\dots,\ell-1$ it holds that $(l,l,n+1-l,n+1-l)$ already forms 
	a $4$-cycle in $M_1, M_2$. We now construct matchings $M'_1, M'_2$ of smaller or equal cost 
	such that $(l,l,n+1-l,n+1-l)$ is also a $4$-cycle for $l=\ell$.
	If $\{u_{l}, v_{l}\} \notin M_1$ it holds that there are edges 
	$\{u_{i}, v_{l}\}, \{u_{l}, v_{j}\} \in M_1$. 
	By our assumption $l < i < n+1-l$.
	and since $A$ is Monge 
	we can exchange those edges for the edges $\{u_{l}, v_{l}\}, \{u_{i}, v_{j}\}$ in $M'_1$.
	See \cref{fig:monge1} for an illustration of this modification.
	Analogously, we can ensure that $M'_1$ also contains $\{u_{n+1-l}, v_{n+1-l}\}$ and 
	$M'_2$ contains both $\{u_{n+1-l}, v_{l}\}, \{u_{l}, v_{n+1-l}\}$, forming the $4$-cycle as claimed.
	We apply this process inductively. This way we obtain a solution consisting of only 
	nested aligned $4$-cycles, except for maybe one additional
	$2$-cycle exactly in the center of the matrix. We obtain the solution $M'_1, M'_2$ as 
	claimed by adding back the vertices (rows and columns) of the $2$-cycles removed in the first step.
\end{proof}

\begin{lemma}
	\label{lemma:cycles:are:nested}
	There is a solution \( M_1,M_2 \) with minimum cost \( c_1(M_1)+c_2(M_2) \) where all \( 4 \)-cycles are nested and aligned , and where no \( 2 \)-cycle is outside of a \( 4 \)-cycle.
\end{lemma}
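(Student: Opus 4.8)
The plan is to start from an optimal solution of the form guaranteed by \cref{lem:nolongcyc}, i.e.\ one consisting only of $2$-cycles and nested aligned $4$-cycles, and to push the $2$-cycles inward by local exchanges until none of them lies outside a $4$-cycle. To drive the process I would use the potential $\Phi = \sum_{z} o(z)$, where the sum ranges over all $2$-cycles $z$ and $o(z)$ counts the $4$-cycles that $z$ is \emph{not} nested inside. Clearly $\Phi\ge 0$, and $\Phi=0$ is exactly the assertion of the lemma. Among all optimal solutions with the structure of \cref{lem:nolongcyc} I would fix one minimizing $\Phi$; if $\Phi>0$ there is a $2$-cycle outside some $4$-cycle, and I exhibit a cost-non-increasing exchange that strictly decreases $\Phi$, contradicting minimality.

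The heart of the argument is the exchange, and the Monge/Anti-Monge conditions do all the work. First I would pick a violating pair consisting of a $4$-cycle $C=(a,b,a',b')$ and an adjacent $2$-cycle $z=(p,q)$ lying outside $C$ with no other cycle separating them in the offending coordinate; this adjacency is what guarantees that the nested aligned family of $4$-cycles is preserved by the exchange. Two representative configurations arise. In the \emph{corner} case (say $a'<p$ and $b'<q$, illustrated in \cref{fig:monge3}) I enlarge $C$ to the aligned $4$-cycle $(a,b,p,q)$ and turn the old corner into the $2$-cycle $(a',b')$; here $M_1$ is unchanged, while $M_2$ is reset to the anti-diagonal pattern on the rows $\{a,a',p\}$ and columns $\{b,b',q\}$, which is the Anti-Monge-optimal assignment on that $3\times 3$ block, so $c_2(M_2)$ does not increase. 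In the \emph{side} case (say $a<p<a'$ and $b'<q$, illustrated in \cref{fig:monge2}) I replace $C$ by $(a,b,a',q)$ and relocate $z$ to $(p,b')$; now both matchings change, but on the altered $2\times 2$ blocks $M_1$ is moved onto the main diagonal (Monge-optimal, so $c_1(M_1)$ does not increase) and $M_2$ onto the anti-diagonal (Anti-Monge-optimal, so $c_2(M_2)$ does not increase). In both cases the targeted $2$-cycle ends up nested inside the resulting $4$-cycle, the inside region of $C$ only grows so that no other $2$-cycle loses nestedness, and hence $\Phi$ strictly drops.

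All remaining relative positions of $z$ and $C$ reduce to these two via the symmetries of the Monge/Anti-Monge setting: transposition (which fixes the two matrix classes and swaps rows with columns), the $180^\circ$ rotation $a_{i,j}\mapsto a_{n+1-i,n+1-j}$ (which again preserves both classes), and reversal of a single axis combined with exchanging the roles of $M_1$ and $M_2$ (reversing one axis turns a Monge matrix into an Anti-Monge matrix and vice versa). Composing these maps the side case onto each of the four sides and the corner case onto each of the four corners, so the two verifications above suffice.

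I expect the main obstacle to be bookkeeping rather than inequalities: each Monge/Anti-Monge check is a one-line application of the defining inequality, but one must argue that an \emph{adjacent} violating pair can always be selected (so that expanding $C$ introduces no $4$-cycle-coordinate strictly between $b'$ and $q$, keeping the family nested and aligned and $\Phi$ well defined) and that every configuration really falls into the side or corner type relative to some such $C$. Making the symmetry reduction fully rigorous, and dealing with the possible odd central $2$-cycle left over by \cref{lem:nolongcyc}, are the remaining points that need care.
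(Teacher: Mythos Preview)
Your proposal is correct and follows essentially the same route as the paper: start from the structure given by \cref{lem:nolongcyc}, locate a $2$-cycle outside a $4$-cycle, and apply the very same two local exchanges (your ``side'' and ``corner'' moves coincide with the paper's Case~1 and Case~2, respectively), appealing to symmetry for the remaining positions. The only difference is packaging---you organize the iteration via a potential $\Phi$ minimized over optimal solutions, whereas the paper processes the $4$-cycles from outside to inside---but the actual edge swaps and the Monge/Anti-Monge justifications are identical.
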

\begin{proof}
	Note that the first part of the claim is already implied by \cref{lem:nolongcyc},
	and we start with matchings $M_1, M_2$ fulfilling the structure stated in \cref{lem:nolongcyc}.
	For the second claim we again process the cycles from outside to inside with respect to the 
	nesting order of the $4$-cycles. Assume that $(x,y)$ is the outmost $2$-cycle in $M_1,M_2$,
	and let $(i,j,i',j')$ be the outmost $4$-cycle that does not contain $(x,y)$.
	We show how to modify $M_1, M_2$ such that the cost does not increase,
	the number of $2$-cycles does not decrease and such that the number of $4$-cycles that 
	that contain all $2$-cycles is increased by one.

	We do this by looking at two distinct cases (up to symmetry).
	Case 1 is the case when $i < x < j$ and $i'<j'<y$, i.e. $(x,y)$ lies on the eastern side of $(i,j,i',j')$.
	In this case we remove $\{u_{i}, v_{j'}\}$ and $\{u_{x}, u_{y}\}$ from $M_2$ and add 
	$\{u_{x}, v_{j'}\}$ and $\{u_{i}, v_{y}\}$ to $M_2$, which can only improve the cost,
	since $B$ is Anti-Monge. In addition we remove $\{u_{i'}, v_{j'}\}$ and $\{u_{x}, v_{y}\}$ from $M_1$
	and add $\{u_{x}, v_{j'}\}$ and $\{u_{i'}, v_{y}\}$ to $M_1$. See \cref{fig:monge2} for this modification.
	Note, that now $(i,j,i',y)$ is a new $4$-cycle containing the new $2$-cycle $(x,j')$.
	A $2$-cycle lying to the north, south or west of the $4$-cycle can be handled similarly.

	Case 2 is the case when the $2$-cycle lies to the southeast of the $4$-cycle, i.e.\
	$i < i' < x$ and $j < j' < y$.
	In this case we remove $\{u_{i'}, v_{j}\}$ and $\{u_{x}, v_{y}\}$ from $M_2$ and replace 
	it with $\{u_{x}, v_{j}\}$ and $\{u_{i'}, v_{y}\}$ which can only decrease the cost by 
	the fact that $B$ is Anti-Monge. As a second step we remove $\{u_{i}, v_{j'}\}$ and $\{u_{i'}, v_{y}\}$
	from $M_2$ and add $\{u_{i'}, v_{i'}\}$ and $\{u_{i}, v_{y}\}$ to $M_2$. See \cref{fig:monge3} for this modification.
	Note, that now $(i,j,x,y)$ is a new $4$-cycle containing the new $2$-cycle $(i',j')$.
	The cases when the $2$-cycle lies northeast, southwest or northwest can be handled similarly.
\end{proof}

Now we are ready to give the proof of \cref{thm:mongestructure}.

\begin{proof}[Proof of \cref{thm:mongestructure}]
	Basically \cref{lemma:cycles:are:nested} already implies the combinatorial 
	structure claimed in \cref{thm:mongestructure}. The only point missing is that
	there are exactly $\lfloor \frac{n-k}{2} \rfloor$ many nested $4$-cycles in the 
	solution and the remaining edges form $2$-cycles inside.
	Note that selecting more $2$-cycles than strictly necessary (by the constraint or the 
	combinatorial structure) is never helpful, since the $4$-cycles correspond to the 
	optimal solution of the two independent \probl{AP}s.
	
	Hence, if 
	$2$ divides $n-k$ the claim follows directly, since this means exactly $k$ $2$-cycles 
	are chosen and the rest is chosen optimally with respect to the Monge and Anti-Monge matrix.

	In the other case, we can assume that $n$ is even and $k$ is odd, since if $n$ is odd an optimal solution
	always selects the edge $\{u_{(n+1)/2}, v_{(n+1)/2}\}$ as a $2$-cycle, giving an equivalent instance 
	with $n-1$ rows and columns and the constraint to select at least $k-1$ many $2$-cycles.
	Hence let $n$ be even and $k$ odd. By a simple parity argument we must select at least $k+1$ many $2$-cycles,
	and the claim follows.
\end{proof}

\section{The Second Stage Recoverable Assignment Problem}
\label{sec:incremental}

In this section we study a variant of \probl{RecAP} in which the perfect matching $M_1$ is fixed
and we are looking for a perfect matching $M_2$ of minimum linear cost $c_2(M_2)$ subject 
to the constraint that $|M_1 \cap M_2| \geq k$. 
Note that \probl{2S-RecovAP} is a special case of 
 \probl{RecovAP}, with the special cost structure $c_1(e) = 0$ if $e \in M_1$ and 
$c_1(e) = \infty$ otherwise.

Using the language of recoverable robust optimization this problem 
is called the incremental assignment problem.
Şeref et al.~\cite{seref-incremnetal-ap} study this problem and obtain a straightforward reduction 
to \probl{Exact Matching} in red-blue bipartite graphs,
one of the few natural problems known to be in RNC for which no polynomial time 
algorithm is known.

\problemdefSimple{\probl{Exact Matching in Red-Blue Bipartite Graphs}}
{A bipartite graph $G = (U \cup V, E)$, a subset of edges $R \subseteq E$ (the red colored edges), and an integer $k \in \nn$.}
{Is there a perfect matching $M$ of $G$ such that $|M \cap R| = k$.}

%
Mulmuley et al.~\cite{mulmuley1987matching} show that this problem can be solved 
in randomized polynomial time if all costs are polynomially bounded. 
In summary, the following result holds.
\begin{corollary}[Şeref et al.~\cite{seref-incremnetal-ap}]
	\probl{2S-RecovAP} can be solved by an $\textup{RNC}_2$ algorithm, if all costs $c_2$ 
	are polynomially bounded.
\end{corollary}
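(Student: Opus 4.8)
The plan is to reduce \probl{2S-RecovAP} to a weighted version of \probl{Exact Matching in Red-Blue Bipartite Graphs} and then to invoke the algebraic algorithm of Mulmuley et al.~\cite{mulmuley1987matching}. First I would color every edge of the frozen matching $M_1$ red and all remaining edges blue, so that $R = M_1$. For any perfect matching $M_2$ we then have $|M_1 \cap M_2| = |M_2 \cap R|$, and since $M_1$ is a perfect matching this quantity is an integer in $\{0,1,\dots,n\}$. Thus the constraint $|M_1 \cap M_2| \geq k$ becomes a lower bound on the number of red edges of $M_2$, which I would handle by solving, for each target $t \in \{k, k+1, \dots, n\}$, the task of finding a minimum $c_2$-cost perfect matching using \emph{exactly} $t$ red edges, and returning the cheapest over all $t$. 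This converts the inequality into polynomially many exact-count instances that can be processed in parallel.

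The core routine is the standard algebraic technique underlying \probl{Exact Matching}. To each edge $e$ I would assign a random integer weight $w(e)$ drawn uniformly from $\{1,\dots,2m\}$, where $m = |E|$; by the Isolation Lemma this isolates a unique minimum-weight perfect matching in each relevant class with probability at least $1/2$. To simultaneously minimize the (polynomially bounded) cost $c_2$, I would use the combined weight $W(e) = N\, c_2(e) + w(e)$, where the multiplier $N$ exceeds the largest possible total random weight, so that minimizing $W$ first minimizes $c_2$ and only breaks ties by $w$. I then form the $n \times n$ biadjacency matrix $B(x)$ with $B_{ij} = 2^{W(\{u_i,v_j\})}\, x^{r(\{u_i,v_j\})}$ whenever $\{u_i,v_j\} \in E$ and $B_{ij} = 0$ otherwise, where $r(e)=1$ for red edges and $r(e)=0$ for blue ones. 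Expanding the determinant, each perfect matching $M$ contributes $\pm 2^{W(M)} x^{|M \cap R|}$, so the coefficient of $x^t$ in $\det B(x)$ equals $\sum_{M:\,|M\cap R|=t} \pm 2^{W(M)}$. Because isolation makes the minimum-weight matching with exactly $t$ red edges unique, its term survives all cancellation and appears as the lowest power of $2$ in that coefficient; reading off this exponent recovers the optimal cost, and the edges of the optimal $M_2$ are identified by the usual minor test $e \in M_2 \iff 2^{W(e)}\det(\text{minor of } B(x))$ realizes the minimum.

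For the complexity bound, note that $\det B(x)$ is a polynomial in $x$ of degree at most $n$ whose integer coefficients have polynomially many bits; here the hypothesis that $c_2$ is polynomially bounded is essential, as it keeps each $2^{W(M)}$ of polynomial bit length. Computing the determinant of a matrix over $\zz[x]$ lies in $\textup{NC}_2$ — for instance by interpolating $\det B(x)$ from its values at $n+1$ points, each a large-integer determinant computable in $\textup{NC}_2$ — and since the only randomness is the choice of isolating weights, the whole procedure falls into $\textup{RNC}_2$. The step I expect to require the most care is verifying that a single draw of weights isolates the minimum-weight matching across \emph{all} red-classes $t \in \{k,\dots,n\}$ at once, so that the coefficient extraction is simultaneously valid for every $t$; this follows by applying the Isolation Lemma to the ground set of edges together with a union bound over the polynomially many classes. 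The sign cancellation inherent in the bipartite determinant causes no trouble precisely because the uniqueness of the isolated matching prevents its surviving term from cancelling.
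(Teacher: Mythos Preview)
Your proposal is correct and follows essentially the same route as the paper: the paper simply cites the straightforward reduction of \c{S}eref et al.\ that colors the edges of $M_1$ red (so $|M_1\cap M_2|=|M_2\cap R|$) and then invokes the Mulmuley--Vazirani--Vazirani $\textup{RNC}_2$ algorithm for (weighted) \probl{Exact Matching}, exactly as you spell out. The only thing you add beyond the paper is the explicit unfolding of the isolation-lemma/determinant machinery and the handling of the inequality $|M_1\cap M_2|\ge k$ by iterating over the exact values $t=k,\dots,n$; both are standard and your treatment is sound.
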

Note, that the techniques for this algorithm are not specific to bipartite graphs. 
Hence they can also be used to solve the second stage of the recoverable perfect matching problem on general graphs in $\text{RNC}_2$.

Surprisingly, we are able to prove that the complexity of these problems is essentially equal. 
We show that 
\probl{2S-RecovAP} is at least as hard as \probl{Exact Matching} in red-blue bipartite graphs.
Note here that our following logspace reduction implies a reduction in $\textup{NC}_2$~\cite{DBLP:books/daglib/0072413}.

\begin{theorem}
	\probl{Exact Matching in red-blue bipartite graphs} is logspace reducible to \probl{2S-RecovAP}.
\end{theorem}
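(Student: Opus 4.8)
The plan is to realize the frozen matching $M_1$ of the \probl{2S-RecovAP} instance as a private copy of the red edges, so that $|M_1\cap M_2|$ records how many red edges the chosen matching uses, and then to convert the ``$=k$'' requirement of \probl{Exact Matching in Red-Blue Bipartite Graphs} into the interplay between the ``$\ge k$'' constraint and the cost minimization of \probl{2S-RecovAP}. Given an instance $(G,R,k)$ with $G=(U\cup V,E)$ and $|U|=|V|=n$, I would first subdivide every red edge $e=\{u,v\}$ into a path $u-p_e-q_e-v$, with the intention of placing the middle edge $\{p_e,q_e\}$ into $M_1$. Since $p_e,q_e$ are private to $e$, the prospective $M_1$-edges $R'=\set{\{p_e,q_e\}:e\in R}$ are pairwise vertex-disjoint. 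The reason for the subdivision is that a perfect matching of the new graph routes $u$ and $v$ through the path (\emph{uses} the red edge $e$) exactly when $\{p_e,q_e\}$ is \emph{not} used; writing $s_R$ for the number of red edges used, the number of picked-up indicator edges is therefore $|R|-s_R$.

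The matching $R'$ covers only the subdivision vertices, so to obtain a genuine perfect matching $M_1$ I would next attach, for every original vertex, a partner: a vertex $\bar u_i$ adjacent to $u_i$ and a vertex $\bar v_j$ adjacent to $v_j$, together with a fixed linking matching $\set{\{\bar u_i,\bar v_i\}:i\in[n]}$ on the partners. Then $M_1:=R'\cup\set{\{u_i,\bar u_i\}:i\in[n]}\cup\set{\{\bar v_j,v_j\}:j\in[n]}$ is a perfect matching of the enlarged graph $G'$, decoupled from any particular matching of $G$. A perfect matching $M_2$ of $G'$ now corresponds to an arbitrary matching $M$ of $G$ of some size $s$: each original vertex left unmatched by $M$ is matched to its partner instead, and the freed partners are absorbed by the linking edges. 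The key bookkeeping step is to check that the only $M_1$-edges available to $M_2$ are the $|R|-s_R$ unused indicators and the $2(n-s)$ partner edges at the out-of-$M$ vertices, which yields the clean identity $|M_1\cap M_2|=(|R|-s_R)+2(n-s)$.

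Finally I would pick the cost $c_2$ to be $1$ on each indicator edge, a large weight $W:=|R|+1$ on each partner edge $\{u_i,\bar u_i\}$ and $\{\bar v_j,v_j\}$, and $0$ elsewhere, so that $c_2(M_2)=(|R|-s_R)+2W(n-s)$, and set both the intersection threshold $k'$ and the cost budget to $|R|-k$. Because any non-perfect choice $s<n$ already pays at least $2W>|R|-k$, the budget forces $s=n$; for perfect matchings the identity collapses to $|M_1\cap M_2|=c_2(M_2)=|R|-s_R$, so the constraint $|M_1\cap M_2|\ge|R|-k$ reads $s_R\le k$ while the budget $c_2(M_2)\le|R|-k$ reads $s_R\ge k$. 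Hence the optimum of the \probl{2S-RecovAP} instance equals $|R|-k$ if and only if $G$ has a perfect matching using exactly $k$ red edges. The subdivision, the partners, the weight $W=|R|+1$, and the thresholds $k'=|R|-k$ are all computable in logarithmic space and keep the costs polynomially bounded.

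I expect the main obstacle to be the tension between the two features of \probl{2S-RecovAP} that pull in opposite directions: the intersection constraint is a lower bound and is \emph{easier} to satisfy for non-perfect matchings, which only \emph{raise} $|M_1\cap M_2|$ through the partner term, whereas the reduction must single out perfect matchings of $G$. Taming this is exactly the role of the large weight $W$, and the delicate point is that the inversion caused by the subdivision turns ``exactly $k$ red'' into ``$s_R\le k$ from the constraint'' together with ``$s_R\ge k$ from the budget'', so that the possible non-contiguity of achievable red-counts becomes irrelevant: one simply reads off whether the optimal value attains the target $|R|-k$.
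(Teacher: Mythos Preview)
Your reduction is correct and genuinely different from the paper's. The paper proceeds in two stages: it first reduces an arbitrary \probl{Exact Matching} instance to one in which the red edges already form a matching, by replacing each vertex $v$ with an independent set of $\deg(v)-1$ copies and each edge $e=\{u,v\}$ with a short path whose middle edge inherits the colour of $e$; only then does it extend this red matching to a perfect matching $M_1$ via pendant vertices carrying cost $\infty$, so that the optimum equals $k$ exactly when a perfect matching with $k$ red edges exists. Your approach collapses both stages into one: the subdivision of red edges simultaneously produces disjoint indicator edges and inverts the count (you track $|R|-s_R$ rather than $s_R$), and the partner-plus-linking gadget with finite weight $W=|R|+1$ replaces the pendants. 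The payoff of your route is a smaller instance with polynomially bounded costs out of the box, and a single clean identity $|M_1\cap M_2|=c_2(M_2)=|R|-s_R$ on the relevant solutions; the paper's route is more modular, isolating ``red is a matching'' as a reusable intermediate form.

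One small remark: your sentence ``a perfect matching $M_2$ of $G'$ now corresponds to an arbitrary matching $M$ of $G$'' overstates the correspondence. With only the linking edges $\{\bar u_i,\bar v_i\}$, a perfect matching $M_2$ forces $u_i$ and $v_i$ to be simultaneously in or out of the induced matching, so only matchings $M$ whose matched $U$-indices coincide with their matched $V$-indices arise. This does not harm your argument at all---the forward direction only needs perfect $M$ (where both index sets are $[n]$), and the backward direction works directly from the identity and the weight $W$---but you should tighten the wording.
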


\begin{proof}
	As a first part of the proof we give a reduction from exact matching in red-blue
	colored bipartite graphs to the special case of the problem where the set of red edges
	forms a matching. In the second part we then show that we can reduce this problem
	to the \probl{Second Stage of the RecovAP}.

	For this first part let $((G=(U \cup V), E), R, k)$ be the given instance of exact matching. We construct 
	a new bipartite graph $G'' = (U'' \cup V'', E'')$ and a set of edges $R'' \subseteq E''$ (see \cref{fig:exactmatching} for an illustration).
	Note that without loss of generality we can assume that $G$ contains no vertex of degree one, 
	since such vertices can always be preprocessed in a trivial way.

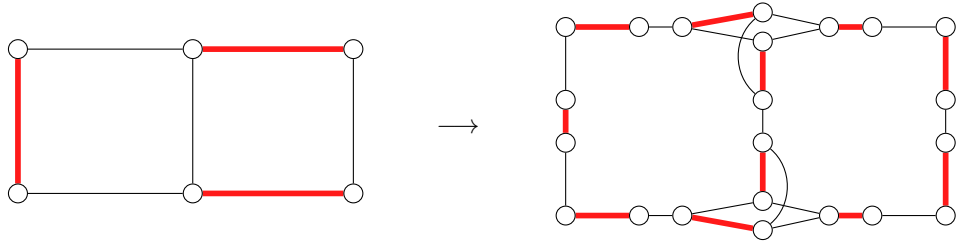
\begin{figure}
	\def\boxSize{2.5cm}
	\def\sideSep{0.2cm}
	\def\sideDist{0.25cm}
	\centering
	\begin{tikzpicture}[every node/.style={draw, circle, fill=white, minimum size=.25cm,inner sep=0pt}]
	\coordinate (tl);
	\coordinate[right=2*\boxSize of tl] (tr);
	\coordinate[below=\boxSize of tl] (bl);
	\coordinate[right=2*\boxSize of bl] (br);
	
	\coordinate[right=1.0*\boxSize of tr] (tl2);
	\coordinate[right=2*\boxSize of tl2] (tr2);
	\coordinate[below=\boxSize of tl2] (bl2);
	\coordinate[right=2*\boxSize of bl2] (br2);
	
	
	\node[draw=none, below right = 0.45*\boxSize and 0.35*\boxSize of tr] (thing) {$\longrightarrow$};
	
	\node[below right = \sideSep and \sideSep of tl] (gv1) {};
	\node[above right = \sideSep and \sideSep of bl] (gv2) {};
	\node[below right = \sideSep and \boxSize of tl] (gv3) {};
	\node[above right = \sideSep and \boxSize of bl] (gv4) {};
	\node[below left = \sideSep and \sideSep of tr] (gv5) {};
	\node[above left = \sideSep and \sideSep of br] (gv6) {};

	\node at (tl2) (rv1) {};
	\node at (bl2) (rv2) {};
	
	\node[above right = 0.5*\sideSep and \boxSize of tl2] (rv31) {};
	\node[below right = 0.5*\sideSep and \boxSize of tl2] (rv32) {};
	\node[above right = 0.5*\sideSep and \boxSize of bl2] (rv41) {};
	\node[below right = 0.5*\sideSep and \boxSize of bl2] (rv42) {};
	
	\node at (tr2) (rv5) {};
	\node at (br2) (rv6) {};
	
	\node[below = (1/3)*\boxSize of tl2] (ue1) {};
	\node[above = (1/3)*\boxSize of bl2] (ve1) {};
	
	\node[right = (1/3)*\boxSize of tl2] (ue2) {};
	\node[left = (4/3)*\boxSize of tr2] (ve2) {};
	
	\node[right = (1/3)*\boxSize of bl2] (ue3) {};
	\node[left = (4/3)*\boxSize of br2] (ve3) {};
	
	\node[below right = ((1.05/3)*\boxSize) and \boxSize of tl2] (ue4) {};
	\node[above right = ((1.05/3)*\boxSize) and \boxSize of bl2] (ve4) {};
	
	\node[left = (1/3)*\boxSize of tr2] (ue5) {};
	\node[right = (4/3)*\boxSize of tl2] (ve5) {};
	
	\node[left= (1/3)*\boxSize of br2] (ue6) {};
	\node[right = (4/3)*\boxSize of bl2] (ve6) {};
	
	\node[below = (1/3)*\boxSize of tr2] (ue7) {};
	\node[above = (1/3)*\boxSize of br2] (ve7) {};
	
	\begin{pgfonlayer}{bg}
	\draw[matched] (gv1) -- (gv2);
	\draw (gv1) -- (gv3);
	\draw (gv2) -- (gv4);
	\draw (gv3) -- (gv4);
	\draw[matched] (gv3) -- (gv5);
	\draw[matched] (gv4) -- (gv6);
	\draw (gv5) -- (gv6);
	
	\draw[matched] (ue1) -- (ve1);
	\draw[matched] (ue5) -- (ve5);
	\draw[matched] (ue6) -- (ve6);
	
	\draw[matched] (rv1) -- (ue2);
	\draw[matched] (rv31) -- (ve2);
	
	\draw[matched] (rv2) -- (ue3);
	\draw[matched] (rv42) -- (ve3);
	\draw (rv42) to[bend right=50] (ve4);
	
	\draw (rv31) to[bend right=50] (ue4);
	\draw[matched] (rv32) -- (ue4);
	\draw[matched] (rv41) -- (ve4);
	
	\draw[matched] (rv5) -- (ue7);
	\draw[matched] (rv6) -- (ve7);
	
	\draw (rv1) -- (ue1);
	\draw (rv2) -- (ve1);
	\draw (ue2) -- (ve2);
	\draw (ue3) -- (ve3);
	\draw (ue4) -- (ve4);
	\draw (rv32) -- (ve2);
	\draw (rv31) -- (ve5);
	\draw (rv32) -- (ve5);
	\draw (rv41) -- (ve3);
	\draw (rv41) -- (ve6);
	\draw (rv42) -- (ve6);
	\draw (rv5) -- (ue5);
	\draw (rv6) -- (ue6);
	\draw (ue7) -- (ve7);
	\end{pgfonlayer}
	\end{tikzpicture}
	
	\caption{Visualization of the reduction from the variant where $R$ is a matching to general exact matching.}
	\label{fig:exactmatching}
\end{figure}

	For every vertex 
	$v$ in $G$  we introduce an independent set $v_1, \dots, v_{\deg(v)-1}$ of $\deg(v)-1$ many vertices.
	For every edge $e = \{u,v\}$ of $G$ we introduce two vertices $u_e$ and $v_e$ in $G''$
	and connect them by the edge $\{u_e, v_e\}$. If $\{u,v\} \in R$ then we add $\{u_e, v_e\}$ to $R''$.
	In addition we add all the edges $\{u_{i}, u_e\}$ for $i=1,\dots,\deg(u)-1$ and $\{v_{i}, v_e\}$ for $i=1,\dots,\deg(v)-1$ to $E''$.
	Observe that the graph $G''$ is bipartite if and only if $G$ is bipartite and note that since $|U| = |V|$ also $|U''| = |V''|$.

	\begin{claim}
		There exists a perfect matching of $G$ with exactly $k$ edges in $R$ if and only 
		if there exists a perfect matching in $G''$ with exactly $k$ edges in $R''$.
	\end{claim}
	\begin{claimproof}
		Given a perfect matching $M$ of $G$ with exactly $k$ edges in $R$ we construct a perfect matching $M''$
		in $G''$ with exactly $k$ edges in $R''$. For each $e \in M$ we add the edge $\{v_e, u_e\}$ to $M''$.
		Note that this way we add exactly $k$ edges from $R''$ to $M''$.
		Now for every vertex $v$ exactly one of its incident edges in $G$ is in $M$.
		Hence there are exactly $\deg(v)-1$ incident edges that are not in $M$.
		For each such edge $\{v,w\} \in E \setminus M$ we select one of the vertices 
		$v_j$ for $j \in \{1, \dots, \deg(v)-1\}$ and add $\{v_j, v_e\}$ to $M''$.
		Note that this way $M''$ is a perfect matching in $G''$ with exactly $k$ edges from 
		$R''$ in $M''$.

		For the converse direction, assume that $M''$ is a perfect matching of $G''$ 
		with exactly $k$ edges from $R''$. We construct a perfect matching $M$ of $G$ 
		with exactly $k$ edges from $R$.
		Note that every original vertex $v \in V$ is replaced by an independent $v_1, \dots, v_{\deg(v)-1}$ in $G''$.
		Each of these vertices is matched 
		to an vertex $v_{e'}$ for an incident edge $e' \in E$. But since there 
		exist only $\deg(v)-1$ such vertices and $v$ has exactly $\deg(v)$ incident 
		edges in $G$ there exists a unique edge $e = \{v,w\}$ in $G$ for which 
		$v_{e}$ is not matched to one of those vertices. Hence $v_{e}$ must be 
		matched to its only remaining neighbor $w_{e}$ by $M''$. Note, that 
		by similar arguments $e$ is also the unique incident edge $e'$ to $w$ 
		for which the vertex $w_{e'}$ is not matched to one of the vertices 
		$w_{1}, \dots, w_{\deg(w)-1}$. We add the edge $\{v,w\}$ to $M$. From the 
		arguments above it is clear that $M$ is a perfect matching in $G$. Since 
		$M''$ contains exactly $k$ edges from $R''$ and for each edge 
		$\{v_e, w_e\}$ in $M''$ we add the edge $e=\{v,w\}$ to $M$, also $M$ 
		contains exactly $k$ edges from $R$.
	\end{claimproof}

\medskip

	As a next step we show how to obtain the instance $(G', M_1, c_2, k)$ of the \probl{Second Stage of the RecovAP}
	such that there exists a perfect matching of $G$ with exactly $k$ edges in $R$ if and only if 
	the optimal value for $(G' = (U' \cup V', E'), M_2, c_2, k)$ is $k$. The graph 
	$G''$ constructed above is a subgraph of $G'$. In addition to that, for each vertex 
	$v \in V''$ that is not matched by $R'$ we add an additional vertex $v'$ to $G'$ and
	the edge $\{v,v'\}$.
Since $G''$ is a bipartite graph with $|U''| = |V''|$ and $R''$ is a matching 
	we can select for each such $v'$ another unique vertex $u'$ and add the edge $\{v',u'\}$ to $G'$.
	We define the set $R'$ as the set of edges consisting of $R''$ and all edges 
	$\{v,v'\}$ for all $v \in V'' \setminus R''$. Note that $R'$ is a perfect matching 
	in $G'$. We set $c_2(e) = 1$ for all $e \in R''$ and $c_2(e) = \infty$ for all edges
	$\{v,v'\}$ where $v \in V'' \setminus R''$. For all other edges the cost $c_2$ is equal to $0$.

	\begin{claim}
		There exists a perfect matching of $G$ with exactly $k$ edges in $R$ if and only 
		if there exists solution to the \probl{Second Stage of the RecovAP} instance $(G', M_1, c_2, k)$ with cost $k$.
	\end{claim}
	\begin{claimproof}
		Assume that there exists a perfect matching $M$ of $G$ with exactly $k$ edges in $R$.
		Then by the claim above there also exists a perfect matching $M''$ of $G''$ with exactly 
		$k$ edges in $G''$. Based on $M''$ we define a perfect matching $M_2$ in $G'$ in the following way.
		The matching $M''$ is added to $M_2$ and hence all vertices in the subgraph $G''$ of $G'$ 
		are matched. For all the remaining vertices, by the construction above there exists 
		a unique matching consisting of the edges $\{v', u'\}$ which are added to $M_2$. Note, 
		that $c_2(M_2) = k$.

		For the converse direction, assume that $M_2$ is a perfect matching in $G''$ with at 
		least $k$ edges from $M_1$ and cost $k$. Since only edges in $R' \cap M_1$ 
		have finite cost and $c_2(M_2) = k$  it holds that exactly $k$ edges from $R'$ are 
		contained in $M_2$. In addition, since none of the edges $\{v, v'\}$ are contained in 
		$M_2$ it holds that $M_2 \cap E'$ is a perfect matching in $G'$ with exactly $k$ edges from $R'$. Hence by the first claim 
		there exists a perfect matching in $G$ containing exactly $k$ edges from $R$.
	\end{claimproof}
	
	This completes the reduction as claimed in the theorem.
	Note that this reduction can be implemented using logarithmic space. 
	We just have to process one vertex after another and need to implement a counter counting up to the degree of a vertex.
\end{proof}

The case with costs $c_2$ that are not polynomially bounded remains open. But note,
that a RNC algorithm for this problem would imply an RNC algorithm for 
the special case of obtaining a minimum cost perfect matching,
which is a long standing open problem~\cite{exactmatchingweb}.

\bibliography{robust}

\newpage

\appendix

\section{An Application in Recoverable Robust Optimization}\label{sec:robustapplication}

The \probl{RecAP} studied in this paper has direct applications to the recoverable robust optimization problem 
studied by the robust optimization community.
In recoverable robust optimization we are given a general discrete optimization 
problem $(E, \mathcal{F})$, where $E$ is the ground set and $\mathcal{F}$ is the family of 
subsets of $E$ giving the feasible solutions. In addition we are given the recoverability parameter $k' \in \nn$,
a linear first stage cost function $c_1$ and an uncertainty set $\mathcal{U}$ that contains different scenarios $S$, where each scenario $c_S \in \mathcal{U}$ gives a possible linear second stage cost function $c_{S}$.

The recoverable robust optimization problem then consists of two stages:
In the first stage, one needs to select a feasible solution $X \in \mathcal{F}$.
Then, after the actual scenario $c_S \in \mathcal{U}$ is revealed, there is a second stage, where 
a second feasible solution $Y \in \mathcal{F}$ is picked with the goal to minimize the worst-case cost $c_1(X) + c_{S}(Y)$ under the constraint that $Y$ 
differs in at most $k'$ elements from the original basis $Y$. This second stage minimization problem is called the incremental problem in the robust optimization literature, since only an incremental change to $X$ is allowed.
This means, we require that $Y$ satisfies $|Y \setminus X| \leq k'$.
The recoverable robust optimization problem can be written as follows:

\begin{equation} \label{eq:recrobopt}
\min_{X \in \mathcal{F}} \left( c_1(X) \;+\; \max_{c_S \in \mathcal{U}} \min_{\substack{Y \in \mathcal{F} \\ |Y \setminus X| \leq  k'} } c_S(Y) \right). 
\end{equation}

There are several ways in which the uncertainty set $\mathcal{U}$  can be represented. One popular way is the \emph{interval uncertainty representation}. In this representation, we are given functions $c': E \rightarrow \rr$, $d: E \rightarrow \rr_+$ and assume that the uncertainty set $\mathcal{U}$ can be represented by a set of $|E|$ intervals:
\[
\mathcal{U} = \left\{ c_S \mid c_S(e) \in [c'(e), c'(e) + d(e)], \; e \in E \right\}.
\]

In the worst-case scenario $\bar{S}$ we have for all $e \in E$ that $c_{\bar{S}}(e) =c'(e) + d(e)$.
Kasperski and Zieli\'{n}ski~\cite{kasperski2015robust} observed that if all feasible solutions in $\mathcal{F}$ 
have the same cardinality $r$, then the constraint $|Y \setminus X| \leq k'$ is equivalent to $|X \cap Y| \geq r-k'$. 
If $\mathcal{F}$ is the set of perfect matchings of a complete bipartite graph we get the recoverable robust assignment problem with interval uncertainty representation, for which $r$ is equal to $n$.
Observe that by Setting $c_2 = c_{\bar{S}}$ an optimal solution to \probl{RecovAP} is an optimal solution to the recoverable robust assignment problem with interval uncertainty representation.
%
Note that the \probl{2S-RecovAP}, studied in \cref{sec:incremental}, 
is exactly the incremental assignment problem studied in robust optimization.

\section{A Dynamic Program for Graphs of Bounded Treewidth}\label{sec:tw}
\newcommand{\MM}{\mathcal{M}}

We show that recoverable matching is FPT when parameterized by the treewidth of the input graph.
Note that the intersection size $k$ is not a parameter here.
We give a dynamic program that based on a nice tree decomposition computes minimum cost matchings with intersection at least $k$.

\begin{definition}[\cite{DBLP:journals/jal/RobertsonS86}]
A tree decomposition of a graph $G$ is a pair $(T,\beta)$ where $T$ is a tree (where we refer to $i \in V(T)$ as nodes) and $\beta$ mapping from the nodes of $T$ to sets of vertices of $G$ such that
\begin{itemize}
\item
$ \bigcup_{i \in V(T)} G[\beta(i)]=G$, and
\item
for every vertex $v\in V(G)$, the set of nodes $\{ i \in V(T) \mid i \in \beta(v) \}$ is connected.
\end{itemize}
The width of $(T,\beta)$ is the size $\max_{i \in V(T)} |\beta(i)|-1$.
The treewidth is the minimum width of all tree decompositions of $G$.
\end{definition}

The treewidth can be computed in FPT \cite{DBLP:journals/siamcomp/Bodlaender96},
	and we assume in the following that we are given a tree decomposition of minimal width.
We may further work with a nice tree decomposition: 
Here leave nodes $i$ have $\beta(i) = \emptyset$, and $T$ has a root $r$ that is a leaf.
Further branching nodes have exactly two children $j_1$ and $j_2$ where $\beta(i)=\beta(j_1)=\beta(j_2)$.
Finally, a node $i$ with exactly one child $j$ is either an introduce node, where $\beta(i)= \beta(j) \cup \{v\}$ for some $v\in V(G)$, or a forget node, where $\beta(i)= \beta(j) \setminus \{v\}$ for some $v\in V(G)$.
One can easily extend a tree decomposition to a nice tree decomposition in time $O( \mbox{tw}\cdot n )$ (see for example \cite{DBLP:journals/actaC/Bodlaender93}).

Our dynamic program determines for each bag $\beta(i)$ given some constraints on $G[\beta(i)]$ an minimum cost solution for the current subgraph $G_i$.
Here $G_i$ is the subgraph induced by $\bigcup_{j \in \beta(T_i)} \beta(j)$ where $T_i$ is the subtree of $T$ rooted at $i$.
We use a \emph{preliminary matching} $M_W$ for the vertices $W$ of a bag $\beta(i)$ that may contain edges and singleton sets of vertices.
Formally $M_W \in\MM_{W}$ where $\MM_W$ is the family of disjoint sets of vertices $W$ of size one or two.
\begin{itemize}
\item
$\{v,w\}\in M_{\beta(i)}$ indicates a matching edge,
and we search for a matching in the subgraph that in particular contains $\{v,w\}$.
\item
$\{v\}\in M_{\beta(i)}$ indicates a vertex $v$ that is not yet matched,
and we search for a matching in the subgraph that in particular matches $\{v\}$ to some vertex $w\notin {\beta(i)}$.
\item
$v \cap \{ S \in M_{\beta(i)} \}=\emptyset$ indicates that $v$ is matched by some vertex not in ${\beta(i)}$,
and we search for a matching in the subgraph that in particular does not match $v$.
\end{itemize}
Consequently, the containment of a vertex $v$ in a bags $\beta(i)$ for one matching on a path from the root to a leaf appears as follows:
A vertex appears in a `forget node' (as we proceed downwards).
There it is added as $\{v\}$ or is already matched to some $w$ resulting in set $\{v,w\}$.
Sooner or later indeed $v$ should be matched to some $w$.
Then eventually either $v$ or $w$ disappears in a `introduce node'.
If $v$ disappears first, we should be able to remove an edge $\{u,w\}$.
If $w$ disappeared first, we should not find any set $\{v\}$.
Thus at an `introduce node' of $v$ we forbid the singleton $\{v\}$.

Note that we handle not one but two containments in matchings, namely $M_1$ and $M_2$.
Thus, as a table entry for our dynamic program, let
$$C_i[M_1,\;M_2,\;k'],~~~~\text{for }~~ M_1, M_2 \in \MM_{\beta(i)},~~ k'\in\{0,\dots,k\}$$
be the minimum cost $c(M_1^\star)+c(M_2^\star)$ of matchings $M_1^\star,M_2^\star$ in the subgraph $G_i$ extending the given preliminary matchings $M_1,M_2$ with intersection $|M_1^\star\cap M_2^\star| \geq k'$.
Here a matching $M_1^\star$ extends the preliminary matching $M_1$ if for every edge $\{v,w\}\in M_1$ appears in $M_1^\star$ and $M_1^\star$ matches every vertex in $V(G_i)$ but $v \in \beta(i)$ where $v \cap \{ S \mid S \in M_W \}=\emptyset$.

We compute the values of $C_i$ bottom up for the nodes $i$ of our tree decomposition.
Eventually we obtain the minimum cost of the recoverable assignment problem by evaluating $C_r(\emptyset,\emptyset,k)$ for the root $r$, which is a leaf node.

For a leaf node $i\neq r$ we only have to consider $M_1 = \emptyset = M_2$.
Then $C_i[\emptyset,\emptyset,0]=0$ and in any other case $C_i[\emptyset,\emptyset,k']=\infty$ for $k' > 0$.
Otherwise for node $i$, consider every subset $M_1, M_2 \subseteq \beta(i)$ and every $k'\in\{0,\dots,k\}$ and compute the minimum cost depending on whether the node is a introduce node, a forget node or a join node:
\begin{itemize}

\item
Case, $i$ is an forget node, and has one child node $j$.
That is there is one forgot vertex $v \in \beta(j)\setminus\beta(i)$.
As discussed before, $\{v\}\in M_1$, or $v$ is matched already to a singleton $\{w\}\in M_1$, likewise for the other matching $M_2$.
In case $M_1$ and $M_2$ match $v$ to a common $w$, we yield a matching edge in the intersection.

\begin{align*}
 C_i[M_1, M_2, k'] = 
 \min \!
\begin{dcases}
\min_{\{w\} \in M_1 \cap M_2}&
	\big\{ C_j[ M_1 \cup \{v,w\} \setminus \{w\},\; M_2 \cup \{v,w\} \setminus \{w\}, \; k'-1 ] \big\}, \\ \!
\min_{\substack{\{w_1\} \in M_1, \\ \{w_2\} \in M_2}}&
	\big\{ C_j[ M_1 \cup \{v,w_1\} \setminus \{w_1\},\; M_2 \cup \{v,w_2\} \setminus \{w_2\}, \; k' ] \big\},\\ \!
\min_{\{w_1\} \in M_1}&
	\big\{ C_j[ M_1 \cup \{v,w_1\} \setminus \{w_1\},\; M_2 \cup \{v\}, \; k' ] \big\},\\ \!
\min_{\{w_2\} \in M_1}&
	\big\{ C_j[ M_1 \cup \{v\},\; M_2 + \{v,w_2\} \setminus \{w_2\}, \; k' ] \big\},\\ \!
&
	\big\{ C_j[ M_1 \cup \{v\},\; M_2 \cup \{v\},\; k' ] \big\}.
\end{dcases}
\end{align*}

\item
Case, $i$ is an introduce node, and has one child node $j$.
That is there is one introduced vertex $v \in \beta(i)\setminus\beta(j)$.
Here we assure that $v$, which is no longer considered for $C_j$, is indeed matched.
that means $\{v\}\notin M_1$ and $\{v\}\notin M_2$ as otherwise $v$ found no matching partner.
\begin{equation*}
 C_i[M_1, M_2, k'] =  
\begin{cases}
C_j[ M_1' \cap \MM_{\beta(j)} ,\;  M_2' \cap \MM_{\beta(j)},\; k'],& \text{if } \{v\} \notin M_1, \{v\} \notin M_2, \\
\infty,& \text{else.}
\end{cases}
\end{equation*}

\item
Case, $i$ is a join node, and has two children $j_1$ and $j_2$.
Here we have to find solutions whose intersections $k^{(1)}$ and $k^{(2)}$ sum up to $k'+|I|$ where $I=\{ \{v,w\} \mid \{v,w\} \in M_1 \cap M_2 \}$ since the intersecting edges of the current bag are counted twice.
Further for a singleton vertex $\{w\}\in M_1$ only one subtree should contain a matching $w$.
Hence let $M^{(1)} \oplus M^{(2)} = M$ denote that $M^{(1)} \cup M^{(2)} = M$ and for $e\in M$ we have: $e \in M^{(1)} \cup M^{(2)}$ if and only if $|e|=2$.
\begin{equation*}
 C_i[M_1, M_2, k'] = 
\min_{\substack{k_1 + k_2 \; = \; k' + |I|, \\ M_1^{(1)} \oplus M_1^{(2)} = M_1, \\M_2^{(1)} \oplus M_2^{(2)} = M_2}} \;
C_{j_1}[ M_1^{(1)} ,\;  M_2^{(1)},\; k^{(1)}]
\; + \;
C_{j_2}[ M_1^{(2)} ,\;  M_2^{(2)},\; k^{(2)}]  \; - \; |I|.
\end{equation*}

\end{itemize}

It remains to consider the runtime.
Let $w$ be the treewidth plus one, hence the maximum bag size.
We maintain a table for values $C_i$ for every node $i$, which cover every choice of $M_1,M_2,k'$.
How many possible preliminary matchings $M \in \MM_{\beta(i)}$ are there?
We can bound the number by guessing the vertices that occur in a set of $M$, that are possibilities $O(2^w)$, and among those guessing a mapping to a potential matching partner, that are $O(2^{w \log w})$ possibility; hence $O(2^{w \log w})$ in total.

Now we consider the time spend at each type of node.
For the leaf and introduce nodes no computation is needed.
For a forget node we have to at most try all combinations of nodes $\{w_1\},\{w_2\}$ that possibly are in respective matchings $M_1, M_2$, which are $O(w^2)$ combinations to try.
Finally for a join node we have to cycle every partition of $k'$ and twice the possible separations of a preliminary matching $M\in \MM_{\beta(i)}$ into preliminary matchings $M_1$ and $M_2$.
There only for the singleton sets $\{w\}\in M$ we have to decide whether $\{w\}\in M_1$ or $\{w\}\in M_2$.
In total this yields $O(n 2^{2w})$ combinations to try for a join node.
Because the table size is FPT and the computation for each table entry as discussed is FPT,
	it follows that recoverable matching with parameter treewidth is FPT.

\end{document}